\DeclareSymbolFont{bbold}{U}{bbold}{m}{n}
\DeclareSymbolFontAlphabet{\mathbbold}{bbold}
\acrodef{RV}{random variable}
\acrodef{i.i.d.}{independent, identically distributed}
\acrodef{p.d.f.}{probability density function}
\acrodef{p.m.f.}{probability mass function}
\acrodef{c.d.f.}{cumulative distribution function}
\acrodef{FT}{Fourier Transform}
\acrodef{ch.f.}{characteristic function}
\acrodef{AWGN}{additive white Gaussian noise}
\acrodef{SNR}{signal-to-noise ratio}
\acrodef{LRT}{likelihood ratio test}
\acrodef{LSCR}{Leave-out Sign-dominant Correlation Regions}
\acrodef{GLRT}{generalized likelihood ratio test}
\acrodef{LOS}{line-of-sight}
\acrodef{NLOS}{non-line-of-sight}
\acrodef{GF}{Greedy Flooding}
\acrodef{GDOP}{geometric dilution of precision}
\acrodef{GPS}{Global Positioning System}
\acrodef{FIM}{Fisher information matrix}
\acrodef{PEB}{position error bound}
\acrodef{WSN}{Wireless Sensor Network}
\acrodef{MAC}{medium access control}
\acrodef{RSS}{received signal strength}
\acrodef{RTT}{round-trip time}
\acrodef{MF}{Modified Flooding}
\acrodef{PF}{Plain Flooding}
\acrodef{ED}{energy detector}
\acrodef{ML}{maximum likelihood}
\acrodef{NL}{nonlinear}
\acrodef{MSE}{mean square error}
\acrodef{RMSE}{root mean square error}
\acrodef{ppm}{part-per-million}
\acrodef{ACK}{acknowledge}
\acrodef{UWB}{ultrawide bandwidth}
\acrodef{TNR}{threshold-to-noise ratio}
\acrodef{NLOS}{non line-of-sight}
\acrodef{LOS}{line-of-sight}
\acrodef{LS}{least squares}
\acrodef{IR-UWB}{impulse radio UWB}
\acrodef{FCC}{Federal Communications Commission}
\acrodef{TH}{time-hopping}
\acrodef{p.m.}{probability measure}
\acrodef{PPM}{pulse position modulation}
\acrodef{PAM}{pulse amplitude modulation}
\acrodef{MUI}{multi-user interference}
\acrodef{PDP}{power delay profile}
\acrodef{BPZF}{band-pass zonal filter}
\acrodef{SIR}{signal-to-interference ratio}
\acrodef{RFID}{radiofrequency identification}
\acrodef{WPAN}{wireless personal area networks}
\acrodef{WWLB}{Weiss-Weinstein lower bound}
\acrodef{DP}{direct path}
\acrodef{MMSE}{minimum-mean-square-error}
\acrodef{SBS}{serial backward search}
\acrodef{SF}{Smart Flooding}
\acrodef{NBI}{narrowband interference}
\acrodef{WBI}{wideband interference}
\acrodef{INR}{interference-to-noise ratio}
\acrodef{CIR}{channel impulse response}
\acrodef{LRT}{likelihood ratio test}
\acrodef{RADAR}{RADAR}
\acrodef{MUR}{Multistatic RADAR}
\acrodef{e.m.}{electromagnetic}
\acrodef{CW}{continuous wave}
\acrodef{RF}{radiofrequency}
\acrodef{FCC}{Federal Communications Commission}
\acrodef{EIRP}{effective radiated isotropic power}
\acrodef{RCS}{radar cross section}
\acrodef{BAV}{balanced antipodal Vivaldi}
\acrodef{PRake}{partial Rake}
\acrodef{RTLS}{real time locating systems}
\acrodef{Hi-RADIAL}{High-accuracy RAdio Detection, Identification,
And Localization}
\acrodef{CRB}{Cram\'{e}r-Rao bound}
\acrodef{ZZB}{Ziv-Zakai bound}
\acrodef{TOA}{time-of-arrival}
\acrodef{TOF}{time-of-flight}
\acrodef{WSN}{wireless sensor network}
\acrodef{MAC}{medium access control}
\acrodef{RSS}{received signal strength}
\acrodef{TAS}{tagged and aggregated sums}
\acrodef{TDOA}{time difference-of-arrival}
\acrodef{RF}{radiofrequency}
\acrodef{RTT}{round-trip time}
\acrodef{AOA}{angle-of-arrival}
\acrodef{ED}{energy detector}
\acrodef{ML}{maximum likelihood}
\acrodef{MUR}{Multistatic radar}
\acrodef{HDSA}{high-definition situation-aware}
\acrodef{RRC}{root raised cosine}
\acrodef{OFDM}{orthogonal frequency division multiplexing}
\acrodef{IF}{intermediate frequency}
\acrodef{PHY}{physical layer}
\acrodef{S-V}{Saleh-Valenzuela}
\acrodef{UHF}{ultra-high frequency}
\acrodef{PR}{pseudo-random}
\acrodef{PPP}{Poisson Point Process}
\acrodef{SoC}{System on Chip}
\acrodef{SoP}{System on Package}
\acrodef{SPMF}{Single-Path Matched Filter}
\acrodef{SPS}{sign perturbed sums}
\acrodef{IMF}{Ideal Matched Filter}
\acrodef{SCR}{signal-to-clutter ratio}
\acrodef{BEP}{bit error probability}
\acrodef{BER}{bit error rate}
\newtheorem{Property}{Property}
\newtheorem{remark}{Remark}
\newtheorem{definition}{Definition}
\newtheorem{lemma}{Lemma}
\newtheorem{theorem}{Theorem}
\begin{document}

\newcommand{\ud}{\mathrm{d}}  
\newcommand{\blue}[1] {{\color{blue}{#1}}}
\newcommand{\red}[1] {{\color{red}{#1}}}
\definecolor{viol}{rgb}{0.7,0,1}
\newcommand{\viol}[1] {{\color{viol}{#1}}}

%
\title{Efficient Distributed Non-Asymptotic Confidence Regions Computation over Wireless Sensor Networks}
%
%
%

\author{Vincenzo~Zambianchi,~\IEEEmembership{Student Member,~IEEE,}
        Michel Kieffer,~\IEEEmembership{Senior Member,~IEEE,}
 		Gianni Pasolini,~\IEEEmembership{Senior Member,~IEEE,}
        Francesca Bassi,~\IEEEmembership{Member,~IEEE,}
   		and Davide~Dardari,~\IEEEmembership{Senior Member,~IEEE}
\thanks{V. Zambianchi, D. Dardari and G. Pasolini are with the Department
of Electrical, Electronics and Information Engineering ``G. Marconi'' (DEI), University of Bologna, via Venezia 52, Cesena (FC) 47521,
Italy. E-mail: \{vincenzo.zambianchi, davide.dardari, gianni.pasolini\}@unibo.it.
M. Kieffer is with CNRS, Supelec, Univ. Paris-Sud, France. E-mail: Michel.Kieffer@lss.supelec.fr.
F. Bassi is with ESME Sudria, Supelec, France. E-mail: francesca.bassi@lss.supelec.fr.
This work has been supported by the EU funded Newcom\# Project.}}

%
%

\markboth
    {submitted to IEEE Transactions on Wireless Communications}
    {Zambianchi et al.: Efficient Distributed Non-Asymptotic Confidence Regions Computation }
%



\maketitle

\begin{abstract}
This paper considers the distributed computation of confidence regions tethered to multidimensional parameter estimation under linear measurement models. In particular, the considered confidence regions are non-asymptotic, this meaning that the number of required measurements is finite. Distributed solutions for the computation of non-asymptotic confidence regions are proposed, suited to wireless sensor networks scenarios. Their performances are compared in terms of required traffic load, both analytically and numerically. The evidence emerging from the conducted investigations is that the best solution for information exchange depends on whether the network topology is structured or unstructured. The effect on the computation of confidence regions of information diffusion truncation is also examined. In particular, it is proven that consistent confidence regions can be computed even when an incomplete set of measurements is available.
\end{abstract}

%

%
\IEEEpeerreviewmaketitle

\section{Introduction}
%
%
%
%

\IEEEPARstart {A}{ }\ac{WSN} consists of energy-limited sensing devices deployed to collaborate in performing  a common task. Examples may be the monitoring of an environmental parameter (\emph{e.g.} temperature or pressure \cite{Akyildiz:2002, Mainwaring:2002,VerDarMazCon:B08}), the detection of a binary event \cite{QueDarWin:07}, the estimation of a spatial field \cite{MatFabAntDar:J11}, the estimation of the coordinates of a signal source \cite{MaoFidAnd:07}, etc.

Depending on the specific task requirements (fault tolerance, privacy issues, energy constraints), either a centralized or a distributed approach can be adopted: In the former a central unit is needed, that collects all the information and completes the objective task, whereas in the latter all nodes accomplish the objective task on the basis of the information previously exchanged among them. In the centralized scenario the adoption of efficient routing schemes is of capital importance. Fundamental contributions in this sense are the energy-efficient adaptive clustering proposed in \cite{HeiChaBal:02} and the routing protocols in \cite{MadLal:06, YanPraKri:06, MinLeuMal:07, KwoShr:12}, aimed at extending network lifetime.

One of the most studied topic in the \ac{WSN} literature is the estimation of physical parameters. The literature is mostly focused on the development of some specific estimation techniques, both for the centralized and distributed approaches. Classical \ac{ML} or \ac{LS} estimators \cite{Kay:2013} work under the hypothesis of having all the required observations available at one central unit. The scarce robustness to central unit failures and poor network scalability have brought to consideration of distributed approaches. For instance, \cite{schizas09,mateos09} address recursive weighted \ac{LS} estimation, alongside a consensus-based algorithm that allows to incorporate information from neighbor nodes in the local estimate. A similar approach is taken within the Bayesian framework in \cite{OlfFaxMur:07,Olf:07b,Olf:09}, where consensus-based distributed Kalman filtering is proposed.

The distributed computation of confidence regions has been less considered:
In some applications, however, (\emph{e.g.,} in source localization) the derivation of the confidence region is as important as the determination of the estimate. Classical Cramer-Rao-like bounds have been proposed to this purpose in \cite{YanSch:05,SheHu:05,PatAshKyp:05,JouDarWin:08}. Confidence regions can also be derived as a by-product of the application of Kalman filtering \cite{Olf:07b,Olf:09}. However, strong assumptions on measurement noise (typically Gaussian) are necessary and a good characterization of confidence regions is only possible for a large number of measurements (asymptotic regime).

If we restrict the attention to the centralized setup, the derivation of confidence regions in the non-asymptotic regime has been proved to be possible using, for example, the results in \cite{CamWey:05,CamKoWey:09, CsaCamWey:12, KieWal:14}.
Specifically, the methods proposed in \cite{CamWey:05, CamKoWey:09} allow the central unit to derive a confidence region and a lower bound on the probability that the true value of the estimated parameter falls within it, whereas the \emph{exact} probability
can be obtained using the \ac{SPS} algorithm in \cite{CsaCamWey:12}.
In \cite{KieWal:14}, an efficient centralized computation of confidence regions is obtained using
interval analysis techniques. Differently from Cramer-Rao-like bounds, these methods
do not require precise statistical knowledge of the noise, and work under very mild assumptions on its distribution.


\subsection{Main Contributions}
Some preliminary results on the derivation of exact non-asymptotic confidence regions, in a distributed scenario, appeared in \cite{ZamKieBasPasDar:14}.
%
To ensure that the confidence region computed by each node is similar in shape to the one that would be evaluated in a centralized setup, nodes have to share their local information with one another. The way of diffusing information drastically impacts on the amount of data exchanged. For this reason, several information diffusion strategies are analyzed and compared in the following. 
%
A novel information diffusion strategy, named \ac{TAS}, is presented. It exploits the peculiarities of the \ac{SPS} algorithm, allowing a reduction of the amount of information to be exchanged among nodes. Its performance is compared to that of established information diffusion strategies, such as flooding \cite{Akyildiz:2002,Heinzelman:1999} and consensus algorithms \cite{OlfFaxMur:07}, in terms of generated traffic load as well as confidence region volume/traffic trade-off. Performance predictions and simulation results are provided for various topologies. The introduction of the \ac{TAS} algorithm is one of the novelties of this work.

Constraints on traffic load may lead to information diffusion truncation: Certain nodes might hence compute a confidence region with partial data. However, we prove that
consistent non-asymptotic confidence regions can be computed, even starting from an incomplete set of measurements. This constitutes a second theoretical novel contribution to be found in this paper.

The remainder is organized as follows. Section~\ref{sec:Recall} formulates the confidence region computation problem and recalls the \ac{SPS} algorithm. Section~\ref{sec:InfoDiffAlg} presents information diffusion strategies. The computation of non-asymptotic confidence regions, from an incomplete set of measurements, is analyzed in Section~\ref{sec:Trunc}. Information diffusion techniques are compared on various network topologies in Sections~\ref{sec:ThAnalysis} and~\ref{sec:SimRes}. Conclusions are drawn in Section~\ref{sec:Concl}.

\subsection{Notation}
In this paper, \acp{RV} are indicated with capital roman or greek letters. Their realizations are denoted by the corresponding lowercase letters. Vectors are denoted by bold letters, being lowercase or uppercase according to their random or deterministic nature, while matrices are indicated with bold capital letters.

\section{Problem Formulation}\label{sec:Recall}
This section recalls the centralized \ac{SPS} algorithm \cite{CsaCamWey:12} for the computation of non-asymptotic confidence regions. Consider some spatial field described by the parametric model
\begin{align}
y_{\text{m}}\left(\mathbf{x},\mathbf{p}\right)=\boldsymbol{\varphi}^{T}\left(\mathbf{x}\right)\mathbf{p},\label{eq:Model}
\end{align}
where $\mathbf{x}\in\mathbb{R}^{n_x}$ represents some vector of experimental conditions (time, location, \ldots) under which the
field is observed, $\boldsymbol{\varphi}\left(\mathbf{x}\right)$
is some regressor function, and $\mathbf{p}$ is
the vector of unknown parameters, belonging to the parameter space $\mathbb{P}\subset \mathbb{R}^{n_p}$. For further discussions on the adopted linear model, one may refer to \cite{NorChiVit:08} and references therein.

Measurements are taken by a network of $N$ sensor nodes, spread at
random locations $\mathbf{x}_{i} \in \mathbb{R}^{n_x}$, $i=1,\dots,N$. Each sensor collects its scalar measurement $y_{i}$ according to the local measurement model
\begin{align}
Y_{i} & =  y_{\text{m}}\left(\mathbf{x}_{i},\mathring{\mathbf{p}}\right)+W_{i} 
 =\boldsymbol{\varphi}^{T}_i\mathring{\mathbf{p}}+W_{i},
\end{align}
where $\boldsymbol{\varphi}^{T}_i= \boldsymbol{\varphi}^{T}\left(\mathbf{x}_{i}\right)$ is the
regressor vector at $\mathbf{x}_{i}$, assumed to be known, $\mathring{\mathbf{p}}$
is the true value of the parameter vector and $W_{i}$ is a random
variable representing the measurement noise. The only assumption on $W_{i}$s is that they are independent from node to node with a distribution, whichever its shape, symmetric with respect to zero.

The aim of this paper is to investigate the distributed derivation of exact
non-asymptotic confidence regions, keeping as low as possible the amount of data that has to be exchanged among sensors.
As starting point, we recall the centralized \ac{SPS} algorithm \cite{CsaCamWey:12} that assumes all measurements and regressors to be known to a central processing unit and returns the exact confidence region around
the least squares estimate $\widehat{\mathbf{p}}$ of $\mathring{\mathbf{p}}$,
obtained as the solution of the normal equations $\sum_{i=1}^{N}\boldsymbol\varphi_{i}\left(y_{i}-\boldsymbol\varphi_{i}^{T}\mathbf{p}\right)=\mathbf{0}$.   Specifically, \cite{CsaCamWey:12} introduces 
the unperturbed sum
\begin{align}
\mathbf{S}_{0}(\mathbf{p})=\sum_{i=1}^{N}\boldsymbol\varphi_{i}\left(Y_{i}-\boldsymbol\varphi_{i}^{T}\mathbf{p}\right)\label{s0}
\end{align}
and the $m-1$ sign-perturbed sums, for some $m$, with $2\leq m \leq N$,
\begin{align}
\mathbf{S}_{j}(\mathbf{p})=\sum_{i=1}^{N}A_{j,i}\boldsymbol\varphi_{i}\left(Y_{i}-\boldsymbol\varphi_{i}^{T}\mathbf{p}\right), \; j=1,\ldots,m-1 \label{sj}
\end{align}
where $A_{j,i}\in\{\pm1\}$ are independent random signs\footnote{A random sign is a symmetric $\pm 1$ valued random variable taking both values with the same probability $1/2$.
}. Introducing
\begin{align}
Z_{j}(\mathbf{p})=||\mathbf{S}_{j}(\mathbf{p})||_{2}^{2},\;\,\,\,\,\,\,\,\, j=0,\ldots,m-1,\label{zj}
\end{align}
one may define the set
\begin{align}
\mathbf{\Sigma}_{q} & =\left\{ \mathbf{p}\in\mathbb{P}|Z_{0}(\mathbf{p})\text{ is not among the \ensuremath{q} largest }Z_{j}(\mathbf{p})\right\} \nonumber \\
 &= \left\{ \mathbf{p}\in\mathbb{P}\left|\quad\sum_{j=1}^{m-1}\mathbb{I}(Z_{j}(\mathbf{p})-Z_{0}(\mathbf{p}))\geq q\right.\right\} ,\label{eq:Test}
\end{align} 
where 
$\mathbb{I}(\cdot)$ is the indicator function on positive reals. In \cite{CsaCamWey:12}, it was proven that
\begin{align}
\text{Prob}(\mathring{\mathbf{p}}\in\mathbf{\Sigma}_{q})=1-\frac{q}{m}.\label{eq:prob}
\end{align}
As a consequence $\mathbf{\Sigma}_{q}$ is a non-asymptotic confidence region with confidence level $1-q/m$.  

In the following, the distributed computation of $\mathbf{\Sigma}_{q}$ will be addressed considering different information diffusion strategies.

\section{Information Diffusion Algorithms} \label{sec:InfoDiffAlg}
This section describes concurrent procedures for information diffusion adapted to \ac{SPS}. The purpose is to let each node capable of collecting the largest amount of measurements $y_i$ and regressors $\boldsymbol\varphi_{i}$ possibly with the lowest amount of data exchanged in the network.
For each presented algorithm, the evolution of the amount of information available at a node $k$ is described by a table $\mathbf{R}^{(k)}$. The construction of $\mathbf{R}^{(k)}$ and the transmission of information depend on the considered procedure.
%

\subsection{\acf{PF} Algorithm}
When adopting this simple information diffusion strategy \cite{Heinzelman:1999,Akyildiz:2002,ZamKieBasPasDar:14} each node broadcasts in turn its own measurement and regressor, \emph{i.e.} $\mathbf{D}^{(k)}=\left[ \boldsymbol\varphi_k^T,y_k \right]$, as well as those received from other nodes in previous rounds. This strategy is the most trivial one but does not result to be particularly efficient. On lossless networks, it is outperformed by the following one, and is therefore no more considered in the remainder of this work.

\begin{table}
\begin{center}
$\mathbf{R}^{(1)}=\;\;$\begin{tabular}{|c||c|c|c|c|c|c|c|}
\hline
$\mathbf{D}_1^{(1)}$ & 1 & 0 & 0 & 0 & 0 & 0 & 0 \tabularnewline
\hline
$\mathbf{D}_2^{(1)}$ & 0 & 0 & 1 & 0 & 0 & 0 & 0 \tabularnewline
\hline
$\mathbf{D}_3^{(1)}$ & 0 & 0 & 0 & 0 & 0 & 1 & 0 \tabularnewline
\hline
$\mathbf{D}_4^{(1)}$ & 0 & 1 & 0 & 0 & 0 & 0 & 0 \tabularnewline
\hline
$\mathbf{D}_5^{(1)}$ & 0 & 0 & 0 & 1 & 0 & 0 & 0 \tabularnewline
\hline
$\mathbf{D}_6^{(1)}$ & 0 & 0 & 0 & 0 & 0 & 0 & 1 \tabularnewline
\hline
$\mathbf{D}^{(1)}_7$ & 0 & 0 & 0 & 0 & 1 & 0 & 0 \tabularnewline
\hline
\end{tabular}
\end{center}
\caption{Table $\mathbf{R}^{(1)}$ of available information at node $k=1$ when \ac{MF} is used for information diffusion.}
\label{TableMF}
\end{table}
 
\subsection{\acf{MF} Algorithm} 
The main difference between the \ac{MF} and the \ac{PF} is that in the former, an information already transmitted by a node is never transmitted again by the same node. This kind of behavior is certainly efficient in terms of amount of data to be transmitted on lossless links.

The \ac{MF} algorithm generates at runtime a table of contents available at nodes. An example for this is depicted in Table~\ref{TableMF}. This table gathers the information collected at node $k=1$ in a network composed of 7 nodes. Each row $r$ in $\mathbf{R}^{(k)}$ contains an available information $\mathbf{D}_r^{(k)}$ and its related tag, indicating the originating node.
When performing the \ac{MF} algorithm, the generic node $k$ initially fills the first line of $\mathbf{R}^{(k)}$ with its own local information, \emph{i.e.},
$\mathbf{D}_{1}^{(k)}=\left [ \boldsymbol\varphi_{k}^T,y_{k}\right ]$ and the corresponding tag\footnote{We denote the tag matrix by $\mathbf{T}^{(k)}$ and its $r$-th row by $\mathbf{t}_r^{(k)}$.} $\mathbf{t}_1^{(k)}$ having a single 1 at the $k$-th entry. It then broadcasts $\left\{ \mathbf{D}_{1}^{(k)}, \mathbf{t}_1^{(k)} \right\}$ and marks the line as already transmitted.
As next step, it collects the data coming from neighbors and inserts in the table this new acquired information, thus creating a set of rows corresponding to its set of neighbor nodes, here denoted as $\mathcal{N}_k$. Then it forwards a new data packet containing the data of all lines in $\mathbf{R}^{(k)}$ which were not marked as already transmitted. This means that the second message that node $k$ transmits contains $ \left\{ \mathbf{D}_{j}^{(k)}, \mathbf{t}_j^{(k)}\right\}_{j\in \mathcal{N}_k}$. All rows whose data have been transmitted are then marked.  
The iteration of the procedure yields, at the next transmission step, a message to be transmitted containing only information never previously transmitted. This process terminates when each node in the network has collected
the information from all nodes. Section~\ref{sec:Trunc} analyses the case when all data cannot be gathered at all nodes due, e.g., to delay/traffic constraints.

Afterwards, each node is able to compute the perturbed and unperturbed sums
in \eqref{s0} and \eqref{sj} for any $\mathbf{p}$, and hence derive the confidence region.
During the first iteration, each node has to transmit a packet containing
\begin{align}\label{dMF}
d_{MF}=n_p+1
\end{align}
real values. The dimension of successive data packets is an integer multiple of this value, possibly zero.
\begin{remark}
If all nodes agree on their random generators seed, the computed confidence regions are the same at all nodes without any need for transmission of $A_{j,i}$. In case this agreement is lacking, still transmission of $A_{j,i}$ can be avoided, but the shape of confidence regions computed at different nodes may differ.
\end{remark}


\subsection{Tagged and aggregated sums (\ac{TAS}) Algorithm} \label{Subsec:TAS}

Before coming to the detailed description of the \ac{TAS} algorithm, a preliminary consideration is needed. Expanding a realization of \eqref{s0} and \eqref{sj} one gets,
\begin{align}
&\mathbf{s}_{0}(\mathbf{p})=\sum_{k=1}^{N}\boldsymbol\varphi_{k}y_{k}- \left(\sum_{k=1}^{N} \boldsymbol\varphi_{k} \boldsymbol\varphi_{k}^{T}\right)\mathbf{p} \label{s0exp} \\ 
&\mathbf{s}_{j}(\mathbf{p})=\sum_{k=1}^{N} a_{j,k}\boldsymbol\varphi_{k} y_{k}- \left(\sum_{k=1}^{N} a_{j,k}\boldsymbol\varphi_{k} \boldsymbol\varphi_{k}^{T}\right)\mathbf{p}, \; j=1,\ldots, m-1 \label{sjexp}
\end{align}
The  evaluation of \eqref{s0exp} and \eqref{sjexp} for any value of $\mathbf{p}$ $\in \mathbb{P}$ does not necessarily require the availability of each individual term but rather of 
\begin{align}\label{dataTAS}
\left \{ \sum_{k=1}^{N} \boldsymbol\varphi_{k}y_{k}, \sum_{k=1}^{N} \boldsymbol\varphi_{k}\boldsymbol\varphi_{k}^{T} ,\left \{ \sum_{k=1}^{N} a_{j,k}\boldsymbol\varphi_{k}y_{k} \right \}_{j=1,\ldots,m-1}, \left \{ \sum_{k=1}^{N} a_{j,k}\boldsymbol\varphi_{k}\boldsymbol\varphi_{k}^{T} \right \}_{j=1,\ldots, m-1} \right \}.
\end{align} 
Therefore, at each information diffusion step, the available information can be composed into an aggregated sum, reducing the traffic load. This is the peculiarity of the \ac{SPS} algorithm that can be exploited by both the \ac{TAS} and the consensus algorithms. The main difficulty lies in avoiding the same term to appear more than once in each sum, independently of network topology. This consideration led to the formulation of the \ac{TAS} algorithm whose details follow.

The \ac{TAS} algorithm consists of six phases, namely, i) initialization, ii) reception, iii) distillation, iv) aggregation, v) transmission, and vi) wrap-up, introduced hereafter.

\vskip 1.5mm i) \textit{Initialization phase}. During the initialization phase each node $k \in\{1,...,N\}$ creates and transmits a data packet which consists of the first row of its table $\mathbf{R}^{(k)}$. This first row is composed of:

\begin{itemize}
 \item  a \textit{data set} $\mathbf{D}^{(k)}_1 = \left\{ \boldsymbol\varphi_{k} y_{k} , \left \{ \boldsymbol\varphi_{k} \boldsymbol\varphi_{k}^{T} \right \} , \left \{ a_{j,k} \boldsymbol\varphi_{k} y_{k} \right \}_{j=1, \ldots,m-1}, \left \{ a_{j,k} \boldsymbol\varphi_{k} \boldsymbol\varphi_{k}^{T} \right \}_{j=1, \ldots,m-1} \right\}$, corresponding to the local quantities related to node $k$. This set consists of 
\begin{align}\label{dTAS}
d_{\text{TAS}}=m\left(n_p+n_p\frac{n_p+1}{2}\right)
\end{align}
real values. This computed dimension takes into account the symmetry of $\boldsymbol\varphi_{k}\boldsymbol\varphi_{k}^{T}$. The dimension of data sets obtained as sums of initial data sets does not vary and stays equal to $d_{\text{TAS}}$.
 \item a \textit{tag vector} $\mathbf{t}^{(k)}_1$, that is an all-zero vector except for the $k$-th entry where a 1 is located.
\end{itemize}
After initialization, the reception, distillation, aggregation, and transmission phases are sequentially repeated until a termination condition is met. Within each cycle, new rows \{$\mathbf{D}^{(k)}_r$, $\mathbf{t}^{(k)}_r$\} are possibly added to $\mathbf{R}^{(k)}$, with $r>1$ representing the row number.

For $r>1$, the $r$-th \textit{data set} $\mathbf{D}^{(k)}_r$ can either contain the local quantities related to another node or the sum of quantities related to several nodes, as specified in $\mathbf{t}^{(k)}_r$.

%

\vskip 1.5mm ii) \textit{Reception phase}. During this phase each node collects \textit{messages} transmitted by its neighbors. The message $\mathbf{m}^{(n)}$ coming from node $n$, with $n \neq k$, consists of a \textit{data set} and a \textit{tag vector}.

\vskip 1.5mm iii) \textit{Distillation phase}. At the end of the reception phase, at each node, the received \textit{tag vectors} and those already stored in $\mathbf{R}^{(k)}$ are compared to detect whether the received data contains new information. More precisely, the received \textit{tag vector} of each incoming \textit{message} is compared to all the already available \textit{tag vectors} $\mathbf{t}^{(k)}_r$ contained in $\mathbf{R}^{(k)}$. If a received \textit{message} does not contain any new contribution it is discarded, otherwise a new row is added to $\mathbf{R}^{(k)}$, containing the new information contribution, that is, the received \textit{message} (\textit{data set}+\textit{tag vector}) duly polished of already available information.

\emph{Example 1}: if node $k$ receives a \textit{message} containing the
sum of quantities originating from nodes 1, 2, 7, 8, 11 and if it has already
rows in $\mathbf{R}^{(k)}$ containing the information relative to
node 1 and to the sum of local quantities of nodes 2 and 7, it can successfully detect
the sum of quantities related to nodes 8 and 11 and insert them in the table. Only this
distilled information, composed of a new \textit{data set} and its corresponding \textit{tag vector} (having the 8th and 11th bits set
to 1) is added to $\mathbf{R}^{(k)}$.

\begin{table}
\begin{center}
\begin{tabular}{|c||c|c|c|c|c|c|c|}
\hline
$\mathbf{D}_1^{(1)}$ & 1 & 0 & 0 & 0 & 0 & 0 & 0 \tabularnewline
\hline
$\mathbf{D}_2^{(1)}$ & 0 & 0 & 1 & 0 & 0 & 0 & 0 \tabularnewline
\hline
$\mathbf{D}_3^{(1)}$ & 0 & 0 & 0 & 0 & 0 & 1 & 0 \tabularnewline
\hline
$\mathbf{D}_4^{(1)}$ & 0 & 1 & 0 & 0 & 0 & 0 & 1 \tabularnewline
\hline
$\mathbf{D}_5^{(1)}$ & 0 & 0 & 0 & 1 & 0 & 0 & 0 \tabularnewline
\hline
$\mathbf{D}_6^{(1)}$ & 0 & 0 & 0 & 0 & 1 & 0 & 1 \tabularnewline
\hline
$\mathbf{D}^{(1)}_7$ & 0 & 1 & 0 & 0 & 1 & 0 & 0 \tabularnewline
\hline
\end{tabular}
\end{center}
\caption{Table of available information at node $k=1$ when information diffusion is done via the \ac{TAS} algorithm.}
\label{Table}
\end{table}

\vskip 1.5mm iv) \textit{Aggregation phase}. To form the next packet to transmit, each node aggregates the information contained in $\mathbf{R}^{(k)}$, summing the available \textit{data sets} and merging the related \textit{tag vectors}. The merge is done as follows. When the aggregation phase takes place for the first time, each node $k$ initializes a \textit{temporary data set} $\mathbf{D}_T^{(k)}$ and the related $N$ elements \textit{temporary tag vector} $\mathbf{t}_T^{(k)}$ with the content of the first row of $\mathbf{R}^{(k)}$ and marks this row as already merged. Then, the node checks whether the next row contains some information that is already accounted for in $\mathbf{t}_T^{(k)}$. If not, it marks it as already merged and sums its corresponding \textit{data set} to $\mathbf{D}_T^{(k)}$ and updates $\mathbf{t}_T^{(k)}$. All rows are then progressively examined. Successive aggregation phases initialize $\mathbf{D}_T^{(k)}$ and $\mathbf{t}_T^{(k)}$ as the content of the first never merged row, starting the search from the first row.

\emph{Example 2}: Consider $\mathbf{R}^{(1)}$ reported in Table~\ref{Table}. Node $k=1$ has to compose the first \textit{message} that it should transmit. It starts from the first
row and initializes $\mathbf{D}_T^{(1)}$ and $\mathbf{t}_T^{(1)}$ with the content of the first row. It then marks the first row as already merged. The second row is then examined. As it contains only new information, with respect to the content of $\mathbf{t}_T^{(1)}$, its corresponding
\textit{data set} is added to $\mathbf{D}_T^{(1)}$ and its \textit{tag vector} is merged with $\mathbf{t}_T^{(1)}$, resulting in  $\mathbf{t}_T^{(1)}=(1, 0, 1, 0, 0, 0)$ and $\mathbf{D}_T^{(1)}=\mathbf{D}_1^{(1)}+\mathbf{D}_2^{(1)}$. The same happens for the third, fourth and fifth rows, that are then all marked as already merged. The sixth row contains, instead, information relative to node 7: Node 7 is already contributing to the current $\mathbf{t}_T^{(1)}$, thus, $\mathbf{D}_6^{(1)}$ is not added to $\mathbf{D}_T^{(1)}$. Afterwards, $\mathbf{D}_T^{(1)}$ and $\mathbf{t}_T^{(1)}$ are transmitted as definitive message, when all rows of the table have been traversed. When the next aggregation phase takes place, $\mathbf{D}_T^{(1)}$ and $\mathbf{t}_T^{(1)}$ are initialized as the content of the first row that has never been merged in the previous 
aggregation phases: In our example, this happens for the sixth row.


\vskip 1.5mm v) \textit{Transmission phase}. The message obtained at the end of the aggregation phase is broadcasted to all neighbor nodes. 

The information diffusion stops after a fixed number of transmission phases: On random networks the limit can be set equal to the diameter of the network (as would be the case for any flooding approach). 

\vskip 1.5mm vi) \textit{Wrap-up phase}. Once the information diffusion expires, the objective, for any node $k$, is the computation of \eqref{dataTAS}, which is then used to evaluate \eqref{zj}. This means finding a strategy to combine the rows in $\mathbf{R}^{(k)}$ to obtain the aggregated data in \eqref{dataTAS}. Two cases are possible: Either $\mathit{rank}\left(\mathbf{T}^{(k)}\right)=N$ and then a perfect reconstruction of \eqref{dataTAS} is possible, since each appearing term can be individually retrieved, or $\mathit{rank}\left(\mathbf{T}^{(k)}\right)<N$ and node $k$ will try to close as much as possible on \eqref{dataTAS}. This can be realized performing a linear combination of the rows of $\mathbf{R}^{(k)}$, aiming at maximizing the amount of data taken into account.

Each node $k$ will evaluate a linearly weighted sum $\mathbf{D}_F^{(k)}=\sum_r \widehat{b}_{k,r} \mathbf{D}_r^{(k)}$, where $\widehat{\mathbf{b}}_k$ is the solution of the following constrained optimization problem
\begin{align}\label{finalProblem}
\widehat{\mathbf{b}}_k&=\operatorname*{arg} \operatorname*{\max}_{\mathbf{b}_{k}} \mathbf{b}_{k} \mathbf{T}^{(k)} \mathbf{1},\\
\textrm{s.t. } 0\le \sum_{r}&b_{k,r}t_{r,i}^{(k)}\le1, \; \; i=1,\ldots,N. \label{constraints}
\end{align}
Here, $t_{r,i}^{(k)}$ are the elements of $\mathbf{T}^{(k)}$, with $r$ and $i$ denoting the row and column indexes. The solution of \eqref{finalProblem}-\eqref{constraints} is obtained by linear programming.

The term $c_{k,i}=\sum_{r}b_{k,r}t_{r,i}^{(k)}$ in \eqref{constraints} represents the weight of the quantities related to node $i$.
Since local quantities in \eqref{sj} cannot contribute more than once, to keep independence among all terms intervening in \eqref{sj}, then it must be $ 0\le c_{k,i} \le1$, that determines the constraints \eqref{constraints}.


\remark{The \ac{TAS} algorithm takes some inspiration from network coding techniques \cite{LiYeuCai:03,KoeMed:03, HoMedKoe:06}. However, the main difference is that each node does not need to decode, by means of Gaussian elimination, all the individual messages transmitted by the other nodes, but rather the decoding of their sum (possibly of an \emph{incomplete} sum) suffices.}

The performance of the \ac{TAS} algorithm will be investigated in Sections~\ref{sec:ThAnalysis} and~\ref{sec:SimRes}.

\subsection{Consensus Algorithm}
Given that the \ac{SPS} algorithm does not require the single terms appearing in \eqref{s0exp} and \eqref{sjexp} but rather their sum, a possibility to compute \eqref{s0exp} and \eqref{sjexp}, in a distributed way, is to launch an average consensus algorithm \cite{XiaBoy:04, XiaBoyLal:05, Xiao06, XiaBoyKim:07}, converging to \eqref{dataTAS}, as recently proposed in \cite{ZamKieBasPasDar:14}. For this information diffusion strategy, $\mathbf{R}^{(k)}$ is always composed of a single row, storing the consensus state vector. Further details can be found in the referenced papers \cite{XiaBoy:04, XiaBoyLal:05, Xiao06, XiaBoyKim:07,ZamKieBasPasDar:14}. Consensus algorithms will be considered in the numerical results section, anyway we will not put more emphasis since they showed a poor performance in terms of generated traffic load, as investigated in \cite{ZamKieBasPasDar:14}.

\section{Analysis of information diffusion truncation}\label{sec:Trunc}
In this section, the effect of truncation of information diffusion is discussed. The objective is to prove that consistent non-asymptotic confidence regions can still be computed via \ac{SPS}, at all nodes, even when the information diffusion process is stopped before each node has gathered all data.

To achieve this objective, the truncated expressions of \eqref{s0} and \eqref{sj} are provided first. Then, some other preliminary definitions and recalls are outlined. Last, a theorem closes the section.

Truncating the information diffusion algorithm entails that \eqref{s0}
and \eqref{sj} are estimated taking into account only the data actually
received by each node. Hence, at node $k$, the following quantities are evaluated from the available data
\begin{align}
\tilde{\mathbf{S}}_{k,0}(\mathbf{p})= & \sum_{i=1}^{N}c_{k,i}\boldsymbol\varphi_{i}\left(Y_{i}-\boldsymbol\varphi_{i}^{T}\mathbf{p}\right)\label{s0trunc}\\
\tilde{\mathbf{S}}_{k,j}(\mathbf{p})= & \sum_{i=1}^{N}c_{k,i}A_{j,i}\boldsymbol\varphi_{i}\left(Y_{i}-\boldsymbol\varphi_{i}^{T}\mathbf{p}\right),\label{sjtrunc}
\end{align}
where $j=1,\ldots,m-1$, and $c_{k,i}\in\left\{0,1\right\}$. The coefficients
$c_{k,i}$ reckon with the availability or absence of the $i$-th measurement, due to truncation, at node $k$.\footnote{The $c_{k,i}$ differ from the weighting coefficients appearing in \cite[Section~2.2]{CsaCamWey:12}. The difference is that, here, they depend on the measurement index $i$. This makes the two forms of weighting completely unrelated.}

Note that \eqref{s0trunc} is the set of normal equations that would be obtained
in a centralized context, considering a weighted least-squares estimator,
with a diagonal weight matrix $\mathbf{C}_{k}=\mathrm{diag}\left(c_{k,1},\dots,c_{k,N}\right)$.
Similarly, \eqref{sjtrunc} is the sign perturbed sum that would be
obtained when considering weighted least-squares. It will be shown that the confidence region,
obtained considering \eqref{s0trunc} and \eqref{sjtrunc} in \eqref{eq:Test},
is still a non-asymptotic confidence region. Reaching completion of the information diffusion algorithm entails that
the $c_{k,i}$ are all equal to one, 
thus ensuring equivalence with the centralized scenario. In case of truncation, instead, the $c_{k,i}$ fall in the interval $[0,1]$, their values depending on the applied information diffusion procedure: In case that the \ac{TAS} or a consensus approach are applied they might take any value in $[0,1]$, otherwise, with flooding, only 0 and 1 are possible values.

Taking the squared norms of \eqref{s0trunc} and \eqref{sjtrunc}, respectively named $\tilde{Z}_0(\mathbf{p})$ and $\tilde{Z}_j(\mathbf{p})$, for $j=1,\ldots,m-1$, allows to define the confidence region that is obtained at node $k$ when truncation occurs, that is,
\begin{align}
\widetilde{\mathbf{\Sigma}}_{q,k} & =\left\{ \mathbf{p}\in\mathbb{P} \left|\quad\sum_{j=1}^{m-1}\mathbb{I}(\tilde{Z}_{j}(\mathbf{p})-\tilde{Z}_{0}(\mathbf{p}))\geq q\right.\right\}.
\end{align}
In order to characterize the consistency of $\widetilde{\mathbf{\Sigma}}_{q,k}$, that relies on an incomplete set of measurements, it is necessary to recall some definitions taken from \cite{CsaCamWey:12}.
\begin{definition}[Symmetric Random Variables]\label{SymVar}
Given a probability space $(\Omega,\mathcal{F},\mathbbold{\Pi})$, $\Omega$ being the sample space, $\mathcal{F}$ the $\sigma$-algebra of events, and $\mathbbold{\Pi}$ the probability measure, a real (possibly $\mathbb{R}^d$-valued) \ac{RV} $X$ is said to be symmetric about the origin $0$ (possibly origin vector $\mathbf{0}$) if
\begin{align}
\forall A \in \mathcal{F}: \mathbb{P}(X\in A)=\mathbb{P}(-X\in A).
\end{align}
\end{definition}
The following property recalls \cite[Lemma~2]{CsaCamWey:12}.
\begin{Property}\label{property1}
Let $A, B_1,\ldots, B_k$ be \ac{i.i.d.} random signs. Then $A,A B_1, \ldots, AB_k$ are also \ac{i.i.d.} random signs.
\end{Property}
\begin{definition}[Uniformly Ordered Variables]\label{Def:UnifOrd}
A finite set of real-valued \acp{RV} $Z_0, Z_1, \ldots, Z_{m-1}$ is said to be uniformly ordered if for all permutations $i_0,i_1, \ldots,i_{m-1}$ of indexes $0,1,\ldots,m-1$, one has
\begin{align}
\mathbb{P}(Z_{i_0}<Z_{i_1}<\ldots<Z_{i_{m-1}})=\frac{1}{m!}.
\end{align}
\end{definition}
Definition \ref{Def:UnifOrd} states that 
all orderings are equiprobable. A direct consequence is that, for a set of uniformly ordered \acp{RV} $Z_0, Z_1, \ldots, Z_{m-1}$, each variable $Z_i$ takes any position in the ordering with probability $1/m$.

%
With the purpose to formulate Lemma~\ref{UnifOrderLemma}, introduced in the following, another few more considerations are needed. In this regard, let $h(Z_0,Z_1,\ldots,Z_{m-1}): \mathbb{R}^m\to \mathbb{N}_0^{m-1}$ be a function of $m$ real variables, with $\mathbb{N}_0^{m-1}$ denoting the set of naturals from 0 to $m-1$. The function provides a permutation $i_0,i_1,\ldots,i_{m-1}$, such that $Z_{i_0}\le Z_{i_1}\le \ldots \le Z_{i_{m-1}}$. In case of ties between input variables, the permutation is uniquely determined by applying the following rule. Suppose that
$n$ variables are tied: Thus $n!$ orderings are possible. Then $h(\cdot)$ provides a reordering choosing among the possible $n!$ with uniform distribution. Having premised this, when $h(\cdot)$ takes \acp{RV} as inputs, it can be considered as a discrete random variable with $m!$ possible outcomes, \emph{i.e.}, as many as the number of possible permutations of $m$ integers.
\begin{lemma}[Uniform Ordering Lemma]\label{UnifOrderLemma}
Let $Z_0,Z_1,\ldots,Z_{m-1}$ be real-valued, \ac{i.i.d.} \acp{RV}. Then they are uniformly ordered.
\end{lemma}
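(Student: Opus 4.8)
The plan is to prove the lemma by exploiting the \emph{exchangeability} that the i.i.d.\ assumption confers on the tuple $(Z_0,\ldots,Z_{m-1})$, together with the symmetry built into the tie-breaking rule that defines $h(\cdot)$. Since every realization is mapped by $h(\cdot)$ to exactly one of the $m!$ permutations of $\{0,1,\ldots,m-1\}$, these permutation-events partition the sample space. It therefore suffices to show that all $m!$ outcomes of $h(\cdot)$ are equiprobable; being disjoint and exhaustive, each must then carry probability $1/m!$, which is precisely the content of Definition~\ref{Def:UnifOrd}.

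First I would record the exchangeability fact: because $Z_0,\ldots,Z_{m-1}$ are i.i.d., for every permutation $\tau$ of the index set the relabelled tuple $(Z_{\tau(0)},\ldots,Z_{\tau(m-1)})$ has the same joint distribution as $(Z_0,\ldots,Z_{m-1})$. Next I would observe that $h(\cdot)$ is \emph{equivariant} under such relabelling: sorting the entries of a permuted tuple yields the sorting permutation of the original tuple composed with $\tau$, and, crucially, the randomized tie-break, being uniform over the orderings of any block of equal values, is itself invariant under relabelling of those tied indices. Fixing two target permutations $\pi$ and $\pi'$ and choosing $\tau$ with $\pi'=\tau\circ\pi$, the relabelling carries the event $\{h=\pi\}$ onto $\{h=\pi'\}$; combining this with the distributional invariance from exchangeability gives
\begin{align}
\mathbb{P}\bigl(h=\pi\bigr)=\mathbb{P}\bigl(h=\pi'\bigr).
\end{align}
As $\pi,\pi'$ were arbitrary, all $m!$ outcomes are equiprobable, and the partition argument closes the proof.

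The hard part will be the rigorous treatment of \emph{ties}, i.e.\ the case where the common law of the $Z_i$ has atoms so that $\mathbb{P}(Z_i=Z_j)>0$ for some $i\neq j$. In the atomless case this difficulty disappears: ties occur with probability zero, the strict-inequality events of Definition~\ref{Def:UnifOrd} already exhaust the space up to a null set, and plain exchangeability delivers equiprobability immediately. With atoms present, I would handle the randomized tie-break by the standard device of perturbing each variable into $Z_i+\varepsilon U_i$, with $U_0,\ldots,U_{m-1}$ i.i.d.\ continuous and independent of everything else: the $Z_i+\varepsilon U_i$ are then i.i.d.\ and atomless, their strict ordering is uniform by the atomless argument, and for every realization this ordering agrees, for all sufficiently small $\varepsilon$, with the ordering of the $Z_i$ refined by a uniform tie-break, exactly the permutation produced by $h(\cdot)$. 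Verifying that this refinement coincides in distribution with the tie-breaking rule stipulated for $h(\cdot)$, and that the symmetry of the $U_i$ renders each admissible resolution of a tied block equally likely, is the delicate bookkeeping step; once it is in place, the uniform ordering of the $Z_i$ follows.
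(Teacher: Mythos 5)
Your proposal is correct and follows essentially the same route as the paper's own proof: the i.i.d.\ assumption gives exchangeability of the tuple, the sorting map $h(\cdot)$ with its uniform randomized tie-break is symmetric under relabelling of the indices, hence its $m!$ outcomes are equiprobable and, forming a partition, each carries probability $1/m!$, which is how both you and the paper read Definition~\ref{Def:UnifOrd}. The only substantive difference is one of rigor: where the paper simply asserts that ``the mechanism by which $h(\cdot)$ is defined guarantees that all outcomes are equally possible,'' you make the relabelling equivariance explicit and supply the tie-handling bookkeeping via the perturbation device $Z_i+\varepsilon U_i$ with i.i.d.\ continuous $U_i$ --- a sound (and standard) way to justify the very step the paper leaves implicit.
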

\begin{proof}
Consider $h(\cdot)$, as previously defined. Since $Z_0,Z_1,\ldots,Z_{m-1}$ are \ac{i.i.d.} the distribution of $h(Z_{i_0},Z_{i_1},\ldots,Z_{i_{m-1}})$ is the same for all permutations. Permutations are in number of $m!$, hence each of the outcome of $h(\cdot)$ has probability $1/m!$, since the mechanism, by which $h(\cdot)$ is defined, guarantees that all outcomes are equally possible. This is equivalent to saying that the variables are uniformly ordered.
\end{proof}
Lemma~\ref{UnifOrderLemma} is a generalization of \cite[Lemma~4]{CsaCamWey:12}, that does not hold for discrete \acp{RV}, to both continuous and discrete \acp{RV}. The need for this extension will appear in the proof of Theorem~\ref{Th1}.

Now, one can state the following theorem.
\begin{theorem} \label{Th1}
Under the assumption of measurement noises being symmetric \acp{RV} and independent across nodes, the confidence level with which the true parameter value $\mathring{\mathbf{p}}$ falls in the region $\widetilde{\boldsymbol\Sigma}_{q,k}$, yielded at node $k$, is
\begin{align}
\text{Prob}(\mathring{\mathbf{p}}\in \widetilde{\boldsymbol\Sigma}_{q,k})=1-\frac{q}{m},
\end{align}
for every $k=1,\ldots, N$.
\end{theorem}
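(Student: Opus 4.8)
The plan is to evaluate all the test statistics at the true parameter $\mathring{\mathbf{p}}$ and to show that, at this point, the truncated squared norms $\tilde{Z}_0(\mathring{\mathbf{p}}),\dots,\tilde{Z}_{m-1}(\mathring{\mathbf{p}})$ form an i.i.d.\ (hence, by Lemma~\ref{UnifOrderLemma}, uniformly ordered) family, after which the claimed coverage follows by a direct counting argument. First I would substitute $\mathbf{p}=\mathring{\mathbf{p}}$ into \eqref{s0trunc}--\eqref{sjtrunc}; since $Y_i-\boldsymbol\varphi_i^{T}\mathring{\mathbf{p}}=W_i$, this yields $\tilde{\mathbf{S}}_{k,0}(\mathring{\mathbf{p}})=\sum_{i=1}^{N}c_{k,i}\boldsymbol\varphi_i W_i$ and $\tilde{\mathbf{S}}_{k,j}(\mathring{\mathbf{p}})=\sum_{i=1}^{N}A_{j,i}c_{k,i}\boldsymbol\varphi_i W_i$. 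It is convenient to set $A_{0,i}:=1$, so that all $m$ sums are written uniformly as $\sum_i A_{j,i}c_{k,i}\boldsymbol\varphi_i W_i$, $j=0,\dots,m-1$.

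The crucial step is to expose the exchangeability across the index $j$ by extracting the signs of the noise. Using the symmetry of each $W_i$ (Definition~\ref{SymVar}), I would write $W_i=\alpha_i|W_i|$, where $\alpha_i:=\mathrm{sign}(W_i)$ are i.i.d.\ random signs, independent of $\{|W_i|\}$ and of the perturbation signs $\{A_{j,i}\}$. Setting $B_{j,i}:=\alpha_i A_{j,i}$, the sums become $\tilde{\mathbf{S}}_{k,j}(\mathring{\mathbf{p}})=\sum_{i=1}^{N}B_{j,i}\,\mathbf{u}_i$ with $\mathbf{u}_i:=c_{k,i}\boldsymbol\varphi_i|W_i|$. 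Applying Property~\ref{property1} column by column (for each fixed $i$, the signs $\alpha_i,\alpha_iA_{1,i},\dots,\alpha_iA_{m-1,i}$ are i.i.d.) and using independence across $i$, I would conclude that the whole array $\{B_{j,i}\}$ consists of i.i.d.\ random signs; in particular its rows, indexed by $j$, are i.i.d.

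I would then condition on $\{|W_i|\}_{i=1}^{N}$, which fixes the vectors $\mathbf{u}_i$; here one notes that the weights $c_{k,i}$ are determined by the network topology and the diffusion bookkeeping (for \ac{TAS}, by the linear program \eqref{finalProblem}--\eqref{constraints}, which reads only the tag matrix $\mathbf{T}^{(k)}$), hence they carry no dependence on the realized noise or signs and may be treated as constants as well. Conditionally, each $\tilde{\mathbf{S}}_{k,j}(\mathring{\mathbf{p}})$ is the same deterministic linear function of the $j$-th row of $\{B_{j,i}\}$; since the rows are i.i.d., the vectors $\tilde{\mathbf{S}}_{k,0}(\mathring{\mathbf{p}}),\dots,\tilde{\mathbf{S}}_{k,m-1}(\mathring{\mathbf{p}})$ are i.i.d., and therefore so are their squared norms $\tilde{Z}_0(\mathring{\mathbf{p}}),\dots,\tilde{Z}_{m-1}(\mathring{\mathbf{p}})$. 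By Lemma~\ref{UnifOrderLemma} they are uniformly ordered, whence the rank of $\tilde{Z}_0(\mathring{\mathbf{p}})$ (assigned through $h(\cdot)$) is uniform over the $m$ positions and the count $\sum_{j=1}^{m-1}\mathbb{I}\bigl(\tilde{Z}_j(\mathring{\mathbf{p}})-\tilde{Z}_0(\mathring{\mathbf{p}})\bigr)$ is uniform over $\{0,\dots,m-1\}$; hence $\mathrm{Prob}(\mathring{\mathbf{p}}\in\widetilde{\boldsymbol\Sigma}_{q,k}\mid\{|W_i|\})=\mathrm{Prob}(\mathrm{count}\ge q)=1-q/m$. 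As this value does not depend on $\{|W_i|\}$, the tower property removes the conditioning and gives the claim for every $k$.

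The main obstacle, and the reason the generalized Lemma~\ref{UnifOrderLemma} is invoked rather than \cite[Lemma~4]{CsaCamWey:12}, is that after conditioning on $\{|W_i|\}$ the statistics $\tilde{Z}_j(\mathring{\mathbf{p}})$ are \emph{discrete}: each truncated sum ranges over a finite set of values driven by the $\pm1$ signs, so ties between the $\tilde{Z}_j$ occur with positive probability and must be broken uniformly at random, exactly as encoded in the tie-handling rule of $h(\cdot)$. The truncation weights $c_{k,i}$ only rescale (or annihilate, when $c_{k,i}=0$) the fixed vectors $\mathbf{u}_i$ and hence leave the i.i.d.\ structure across $j$ intact; the only care required is to confirm that $c_{k,i}$ is independent of the noise realization, which holds because it is produced by the data-independent diffusion mechanism.
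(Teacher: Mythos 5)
Your proposal is correct, and its skeleton is the same as the paper's: evaluate the truncated sums at $\mathring{\mathbf{p}}$, factor each noise term into a random sign times a sign-independent quantity, use Property~\ref{property1} to turn the resulting sign array into i.i.d.\ random signs, condition so that the $\tilde{Z}_j(\mathring{\mathbf{p}})$ become conditionally i.i.d.\ and hence uniformly ordered by Lemma~\ref{UnifOrderLemma}, count, and then remove the conditioning (your tower-property step is exactly the paper's appeal to \cite[Lemma~3]{CsaCamWey:12}). You also correctly identify the reason the generalized Lemma~\ref{UnifOrderLemma} is needed: after conditioning the statistics are discrete and ties have positive probability, which is precisely the point the paper makes. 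The one substantive difference is the decomposition device. You use the sign--magnitude factorization $W_i=\alpha_i|W_i|$ with $\alpha_i=\mathrm{sign}(W_i)$ and condition on $\{|W_i|\}_{i=1}^N$, whereas the paper applies \cite[Lemma~1]{CsaCamWey:12} to introduce \emph{auxiliary} i.i.d.\ random signs $B_i$, writes $c_{k,i}W_i=B_i V_i$ with $V_i=B_i c_{k,i}W_i$ independent of $B_i$, and conditions on $\{V_i\}_{i=1}^N$. The two are equivalent in substance, but the paper's version works verbatim for any noise satisfying Definition~\ref{SymVar}, while yours needs a small patch when $\mathbb{P}(W_i=0)>0$ (allowed here, since the symmetric noise may be discrete): on that event $\mathrm{sign}(W_i)$ is not a $\pm1$-valued random sign. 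The patch is cosmetic --- on $\{W_i=0\}$ the term $c_{k,i}\boldsymbol\varphi_i W_i$ vanishes, so $\alpha_i$ may be redefined there as an independent equiprobable $\pm 1$ variable, preserving both the factorization and the independence of $\alpha_i$ from $|W_i|$ --- but the auxiliary-sign route is what lets the paper avoid saying this. Your side remarks (that the weights $c_{k,i}$ are fixed by topology and diffusion bookkeeping, hence independent of the noise and signs) likewise mirror the paper's observation that truncation only rescales the noise terms and preserves independence and symmetry.
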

\begin{proof}
Following a similar approach as in \cite{CsaCamWey:12},
the evaluation of \eqref{s0trunc} and \eqref{sjtrunc} for $\mathring{\mathbf{p}}$
gives
\begin{align}\label{s0p*}
\tilde{\mathbf{S}}_{k,0}(\mathring{\mathbf{p}})=\sum_{i=1}^{N}c_{k,i}\boldsymbol\varphi_{i}W_{i}
\end{align}
and
\begin{align}\label{sjp*}
\tilde{\mathbf{S}}_{k,j}(\mathring{\mathbf{p}})=\sum_{i=1}^{N}c_{k,i}A_{j,i}\boldsymbol\varphi_{i}W_{i},
\end{align}
with $j=1,\ldots,m-1$.
The truncation results in a
rescaling of measurement noise terms $W_{i}$, since it only depends on
the communication links effectively traversed during the information
diffusion phase. This rescaling preserves independence as well as
symmetry of noise distributions. 
Consider, further, that from \eqref{s0p*} and \eqref{sjp*}, one can derive
\begin{align}
\tilde{Z}_{k,0}(\mathring{\mathbf{p}}) & =\left\Vert \sum_{i=1}^{N}c_{k,i}\boldsymbol\varphi_{i}W_{i}\right\Vert _{2}^{2},
\end{align}
and
\begin{align}
\tilde{Z}_{k,j}(\mathring{\mathbf{p}}) & =\left\Vert \sum_{i=1}^{N}c_{k,i}A_{j,i}\boldsymbol\varphi_{i}W_{i}\right\Vert _{2}^{2}.
\end{align}
These last two expressions may be rewritten highlighting the independent
random measurement noise terms $W_{1},\ldots,W_{N}$, \emph{i.e.},
\begin{align}\label{Z0tilde}
\tilde{Z}_{k,0}(\mathring{\mathbf{p}}) & =f(c_{k,1}W_{1},\ldots,c_{k,N}W_{N}),
\end{align}
\begin{align}\label{Zjtilde}
\tilde{Z}_{k,j}(\mathring{\mathbf{p}}) & =f(c_{k,1}A_{j,1}W_{1},\ldots,c_{k,N}A_{j,N}W_{N}),
\end{align}
As already pointed out, each $c_{k,i}W_{i}$ has a symmetric distribution. By applying Lemma~1 from \cite{CsaCamWey:12} to the variables in the collection $\{c_{k,i}W_{i}\}_{i=1}^N$ and introducing the set of random signs $\left\{ B_{i}\right\} _{i=1}^{N}$ we can write $c_{k,i}W_{i}=B_{i}(B_{i}c_{k,i}W_{i})=B_{i}V_{i}$, where $B_i$ and $V_i=B_{i}c_{k,i}W_{i}$ are independent $\forall i$ \cite[Lemma~1]{CsaCamWey:12}. We can compact \eqref{Z0tilde} and \eqref{Zjtilde} in the single expression
\begin{align}
\tilde{Z}_{k,j}(\mathring{\mathbf{p}}) & =f(D_{j,1}V_{1},\ldots,D_{j,N}V_{N})
\end{align}
for $j=0,\ldots,m-1$, with $D_{0,i}\triangleq B_{i}$ and $D_{j,i}\triangleq A_{j,i}B_{i}$
for $j=1,\ldots,m-1$. The set of \acp{RV} $\left\{ D_{j,i}\right\} _{i=1,\, j=0}^{N,\, m-1}$ is
also a collection of \ac{i.i.d.} random signs, this deriving from Property~\ref{property1} applied to the \ac{i.i.d.} random signs $B_{i}$ and $\left\{ A_{j,i} \right\}_{j=1}^{m-1}$. Now fix a realization for $\left\{ V_{i}\right\} _{i=1}^{N}$, indicated as $\left\{ v_{i}\right\} _{i=1}^{N}$. Conditioning
on $\left\{ V_{i}\right\}_{i=1}^{N}=\left\{ v_{i}\right\} _{i=1}^{N}$,
$\left\{ \tilde{Z}_{k,j}(\mathring{\mathbf{p}}) \big| \left\{ V_{i}\right\}_{i=1}^{N}=\left\{ v_{i}\right\} _{i=1}^{N}  \right\}_{j=0}^{m-1}$ is a
collection of discrete, real-valued, and \ac{i.i.d.} \acp{RV}, since $\left\{ D_{j,i}\right\} _{i=1,\, j=0}^{N,\, m-1}$ is a collection of \ac{i.i.d.} random signs. Applying Lemma~\ref{UnifOrderLemma} to $\left\{ \tilde{Z}_{k,j}(\mathring{\mathbf{p}}) \big| \left\{ V_{i}\right\}_{i=1}^{N}=\left\{ v_{i}\right\} _{i=1}^{N} \right\}_{j=0}^{m-1}$ leads to the consideration that these variables are uniformly ordered.
This implies that the \ac{RV} $\tilde{Z}_{k,0}(\mathring{\mathbf{p}}) \big| \left\{ V_{i}\right\}_{i=1}^{N}=\left\{ v_{i}\right\} _{i=1}^{N}$ takes each position in the ordering with probability $1/m$. The conclusion is that it
is not among the $q$ largest $\tilde{Z}_{k,j}(\mathring{\mathbf{p}}) \big| \left\{ V_{i}\right\}_{i=1}^{N}=\left\{ v_{i}\right\} _{i=1}^{N}$,
$j=0,\ldots,m-1$, with probability $1-q/m$. Since this probability value is independent of the particular realization of $\left\{ V_{i}\right\}_{i=1}^{N}$, one can apply \cite[Lemma~3]{CsaCamWey:12} to say that $\text{Prob}(\mathring{\mathbf{p}}\in \widetilde{\boldsymbol\Sigma}_{q,k})=\text{Prob}\left(\tilde{Z}_{k,0}(\mathring{\mathbf{p}}) \text{ is not among the $q$ largest } \tilde{Z}_{k,j}(\mathring{\mathbf{p}})\right)=1-\frac{q}{m}$ is valid also when not conditioning on noise realizations. This concludes the proof.
\end{proof}
\begin{remark}
We introduced Lemma~\ref{UnifOrderLemma} to prove that \ac{i.i.d.} discrete \acp{RV} are uniformly ordered. From \cite[Lemma~4]{CsaCamWey:12}, one can draw this conclusion only for continuous variables. 
The here presented Lemma~\ref{UnifOrderLemma} generalizes \cite[Lemma~4]{CsaCamWey:12}. This copes with the discrete \acp{RV}, that are appearing when noise realizations are fixed, as done in the proof.
\end{remark}
\begin{remark}\label{RemarkShape}
When at sensor node $k$ there is only a single non-zero coefficient, $c_{k,k}=1$, meaning that truncation in information diffusion occurred before node $k$ could gain knowledge about any other sensor than itself, then, the $m\times N$ matrix formed by all random signs $A_{j,i}$ participating in the confidence region computation at node $k$ has only one column filled with values $\{-1,1\}$, while all the remaining ones are filled with zeros.
Its rank is hence equal to 1 and the norms $\tilde{Z}_{k,j}(\mathbf{p})$ are all equal independently of $j$ and for any value of $\mathbf{p}$. This is certainly the case for which the highest number of ties occurs, nevertheless, choosing at random for the reordering, yields a random confidence region, covering a percentage equal to $1-q/m$ of the initial search space. The computed confidence region keeps again the same level of confidence $1-\frac{q}{m}$, as stated by Theorem~\ref{Th1}. This observation gives an insight on the reason why the shape of confidence regions is affected by information availability. 
\end{remark}

\section{Traffic load on various network topologies} \label{sec:ThAnalysis}
For a fair comparison of different information diffusion strategies, the network traffic burden has to be characterized. The algorithms are compared on specific topologies, such as random trees, with binary trees as a special case, and clustered networks, that are the most commonly used in practical applications \cite{VerDarMazCon:B08}. In Section~\ref{sec:SimRes}, completely random networks will also be considered. 

Before entering into the details of our analytical investigation, let us recall that $d_{\text{TAS}}$ and $d_{\text{MF}}$, respectively given by \eqref{dTAS} and \eqref{dMF}, denote the numbers of real-valued scalars that a single data is composed of when the \ac{TAS} or the \ac{MF} algorithm are considered.

The remainder of this section is divided into as many subsections as the considered topologies.

\begin{figure}[t]
\centering \includegraphics[width=1\columnwidth]{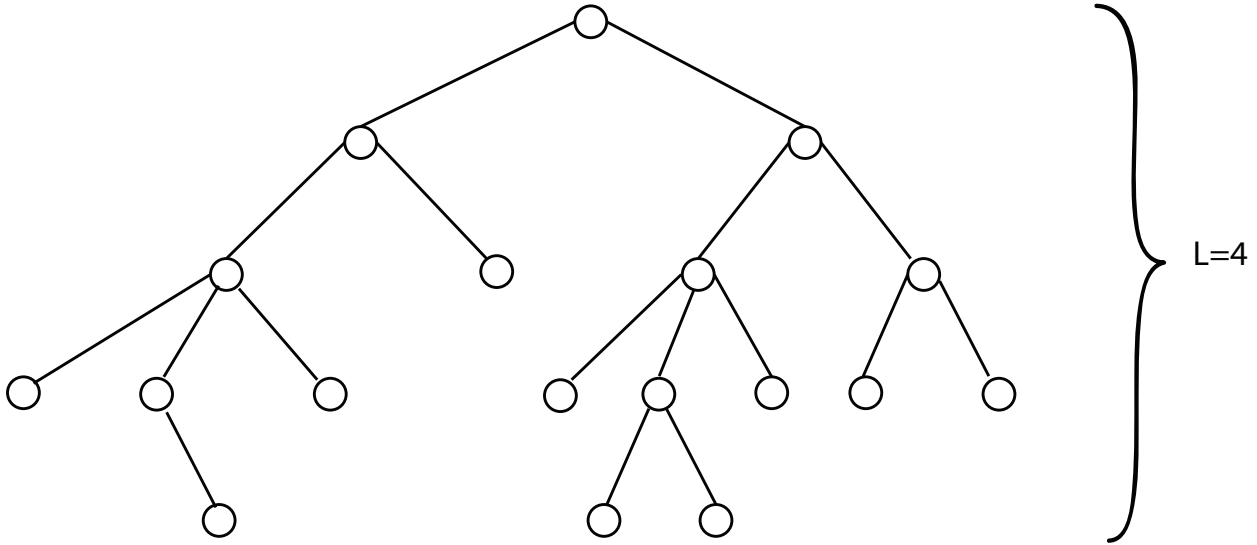} \caption{A random tree topology. Some of the random variables, that describe it, take values $\Lambda(0)=1,\, \Lambda(1)=2,\, \Lambda(2)=4,\, \bar{\Lambda}(2)=1, \, \Lambda(3)=8,\, \bar{\Lambda}(3)=6$, etc.}
\label{Fig:RandomTree}
\end{figure}

\subsection{Random Tree Topology}\label{subsec:RandomTree}
Consider a random tree topology, \emph{i.e.}, a tree where each node has a random number of sons, possibly zero. 
The number of nodes forming the network is considered equal to $N$. The levels in the tree are indicated by $\ell$, with $\ell\geq0$. $L$ denotes the lowest level in the tree.
The set of nodes in the $\ell$-th level of the tree is denoted as $\mathbb{L}_\ell$, having cardinality $\Lambda(\ell)$, which is a \ac{RV}. Nevertheless, $\Lambda(0)=1$ and is not random, since the tree is single rooted. Moreover, the set of nodes, in the $\ell$-th level in the tree, that are not parents to any nodes, is denoted by $\bar{\mathbb{L}}_\ell$ and its cardinality is a \ac{RV} denoted by $\bar{\Lambda}(\ell)$. 

\subsubsection{\ac{TAS} algorithm}
The \ac{TAS} algorithm of Section~\ref{Subsec:TAS} does not assume any ordering in the network on which it should run. On a random tree, however, it is possible to simplify it making nodes transmit much less frequently than required on an unstructured random topology.

During each transmission phase a single level of the tree is active. Only nodes in this level can transmit. Starting from level $L$, each node in $\mathbb{L}_L$ has to broadcast its own local data. Then, parent nodes distil and aggregate the received quantities with
their own ones and broadcast. The process is repeated until the tree root is reached. The tree is then traveled backwards, from Level 0 to Level $L$, making the complete sum available to all
nodes. This way of operating ensures that an exact retrieval of the entire sum is possible and that no truncation occurs, if the procedure is completed. Nodes participate only in the (at most two) rounds of transmission involving the level they belong to.
The number of data that must be transmitted when employing the \ac{TAS} algorithm is a discrete \ac{RV} 
given by
\begin{align}\label{dataTASTree}
N_{\text{TAS}}^{\text{RT}}=\sum_{\ell=0}^{L} \Lambda(\ell) d_{\text{TAS}} +\sum_{\ell=1}^{L-1}\Lambda(\ell) d_{\text{TAS}} -\sum_{\ell=1}^{L-1}\bar{\Lambda}(\ell) d_{\text{TAS}}.
\end{align}
$N_{\text{TAS}}^{\text{RT}}$ consists of the number of data transmitted when traversing the tree from level $L$ to the root, included, plus the amount of data required by the backwards travel. The last term in \eqref{dataTASTree} is related to nodes without sons, which do not transmit anything when the travel backwards is performed.

\subsubsection{\ac{MF} algorithm}
For the \ac{MF}, one instead gets,
\begin{align}\label{dataMFTree}
N_\text{MF}^{\text{RT}}&=\Lambda(L) d_{\text{MF}}+(\Lambda(L)+\Lambda(L-1))d_{\text{MF}}+\ldots+\sum_{\ell=0}^L \Lambda(\ell) d_{\text{MF}} +N\left(\Lambda(1)-\bar{\Lambda}(1)\right) d_{\text{MF}} \nonumber\\
&\phantom{aaaaa}-\sum_{\ell=2}^L \Lambda(\ell) d_{\text{MF}}-\left(\Lambda(1)-\bar{\Lambda}(1)\right)d_{\text{MF}} +\ldots+ N\left(\Lambda(L-1)-\bar{\Lambda}(L-1)\right) d_{\text{MF}} \nonumber\\
&\phantom{aaaaa}- \Lambda(L) d_{\text{MF}}- \left(\Lambda(L-1)-\bar{\Lambda}(L-1)\right)d_{\text{MF}}\nonumber\\
&=\Lambda(L) d_{\text{MF}}\!+\!(\Lambda(L)\!+\!\Lambda(L-1))d_{\text{MF}}\!+\!\ldots\!+\!\sum_{\ell=0}^L \Lambda(\ell) d_{\text{MF}}\! +\! (N-1)\left(\Lambda(1)\!-\!\bar{\Lambda}(1)\right) d_{\text{MF}} \nonumber\\
&\phantom{aaaaa}-\sum_{\ell=2}^L \Lambda(\ell) d_{\text{MF}}+\ldots+(N-1)\left(\Lambda(L-1)-\bar{\Lambda}(L-1)\right) d_{\text{MF}} - \Lambda(L) d_{\text{MF}}\nonumber\\
 &=\sum_{\ell=0}^L \Lambda(\ell) d_{\text{MF}}+\Lambda(L)d_{\text{MF}}+N\sum_{\ell=1}^{L-1}\left(\Lambda(\ell)-\bar{\Lambda}(\ell)\right) d_{\text{MF}} +\sum_{\ell=1}^{L-1} \bar{\Lambda}(\ell) d_{\text{MF}},
\end{align}
where $\Lambda(L) d_{\text{MF}}+(\Lambda(L)+\Lambda(L-1))d_{\text{MF}}+\ldots+\sum_{\ell=0}^L \Lambda(\ell) d_{\text{MF}}$ is the amount of data transmitted in the forward travel, $N\left(\Lambda(1)-\bar{\Lambda}(1)\right) d_{\text{MF}}$ is the amount of data (proportional to $N$) that nodes, with sons, in level 1 would transmit when the tree is traveled backwards, as if no forward travel was ever performed, and $\sum_{\ell=2}^L \Lambda(\ell) d_{\text{MF}}+\left(\Lambda(1)-\bar{\Lambda}(1)\right)d_{\text{MF}}$ is the amount of data that has to be subtracted from the previous one, since these data have already been transmitted in the forward travel. The other terms can be similarly explained.

\subsubsection{Comparison}\ac{TAS} is more efficient than \ac{MF} when $N_\text{TAS}^{\text{RT}} <N_\text{MF}^{\text{RT}}$, i.e., when
\begin{align}
&\sum_{\ell=0}^L \Lambda(\ell) (d_{\text{TAS}}-d_{\text{MF}}) -\sum_{\ell=1}^{L-1} \bar{\Lambda}(\ell) (d_{\text{TAS}}+d_{\text{MF}}) - \Lambda(L) d_{\text{MF}} \nonumber\\
&\phantom{aaaaaaaaaaaaaaa}- N \sum_{\ell=1}^{L-1}\left( \Lambda(\ell) - \bar{\Lambda}(\ell) \right) d_{\text{MF}}  + \sum_{\ell=1}^{L-1} \Lambda(\ell) d_{\text{TAS}} < 0\nonumber\\
&N(d_{\text{TAS}}-d_{\text{MF}})-\Lambda(L) d_{\text{MF}}+(N-\Lambda(L)-1)d_{\text{TAS}}\nonumber\\
&\phantom{aaaaaaaaaaaaaaa}-N(N-\Lambda(L)-1)d_{\text{MF}} + \sum_{\ell=1}^{L-1} \bar{\Lambda}(\ell)\left((N-1)d_{\text{MF}}-d_{\text{TAS}}\right) < 0,
\end{align}
that is,
\begin{align}\label{randcomp}
\left(d_{\text{MF}}+d_{\text{TAS}}-Nd_{\text{MF}}\right)\Lambda(L) &> N(d_{\text{TAS}}-d_{\text{MF}})+(N-1)d_{\text{TAS}}-N(N-1)d_{\text{MF}} \nonumber\\
&\phantom{aaaaaaaaa}+ \sum_{\ell=1}^{L-1} \bar{\Lambda}(\ell) \left((N-1)d_{\text{MF}}-d_{\text{TAS}}\right).
\end{align}

In case $N$ is finite, \eqref{randcomp} is satisfied with a probability that is not easily evaluated. Nevertheless, an asymptotic consideration can be done. Firstly, when $N \rightarrow \infty$, we assume that $L \rightarrow \infty$. This is precisely the case when the area on which nodes are deployed is increasing with $N$, due to coverage extension purposes, or if the communication range $d_{\text{comm}}$ is diminishing with $N$, due to interference mitigation purposes. Moreover, one has $\sum_{\ell=1}^{L-1} \bar{\Lambda}(\ell) < N-\Lambda(L)-L$, since the number of nodes without sons, in all levels except 0 and $L$, cannot exceed the total number of nodes deprived of the number of nodes at level $L$ and of at least one node for each of the $L$ levels from 0 to $L-1$ (if no nodes are present in a level there cannot be any further levels). Then, passing to the limit for $N\to +\infty$, if
\begin{align}
\lim_{N\to \infty} \left\{ N( \right. & d_{\text{TAS}} -d_{\text{MF}}) +(N-1)d_{\text{TAS}}-N(N-1)d_{\text{MF}} \nonumber\\
&\left. + \left((N-1)d_{\text{MF}}-d_{\text{TAS}}\right) \left(N-\Lambda(L)-L\right)-\left(d_{\text{MF}}+d_{\text{TAS}}-Nd_{\text{MF}}\right) \Lambda(L) \right\} < 0
\end{align}
holds, \emph{i.e.},
\begin{align}\label{lim}
\lim_{N\to \infty} (d_{\text{TAS}}-d_{\text{MF}})+\frac{(L-1)}{N}d_{\text{TAS}}-\frac{L(N-1)}{N}d_{\text{MF}}<0,
\end{align}
then also \eqref{randcomp} holds asymptotically.
Since $L$ also goes to $+\infty$,
\eqref{lim} is verified and thus \eqref{randcomp} holds for all values of $\Lambda(L)$, hence with probability 1. Moreover, this is true for all values of the problem dimensions, \emph{i.e.}, $n_p$ and $m$.

\subsection{Binary Tree Topology}
A deterministic \emph{complete binary} tree topology, that is a tree where each node has exactly two sons apart from nodes in level $L$ that do not have any, is now considered. Assuming that the binary tree consists of $L$ levels entails $N=2^{L+1}-1$.

\subsubsection{\ac{TAS} algorithm}For the \ac{TAS} algorithm, the total number of required data communications is deduced from \eqref{dataTASTree}
\begin{align}
N_\text{TAS}^{\text{BT}} & = \left(2^{L}+2^{L-1}+\ldots+2^{1}+1+2^1+\ldots+2^{L-1}\right)d_{\text{TAS}}\nonumber\\
 & =\left(2^{L+1}-2+\frac{1}{2}\left(2^{L+1}-2\right)\right)d_{\text{TAS}} =\left(\frac{3N-3}{2}\right)d_{\text{TAS}}.
\end{align}

\subsubsection{\ac{MF} algorithm} The total number of data communications required by the \ac{MF} algorithm is deduced from \eqref{dataMFTree}
\begin{align}
N_\text{MF}^{\text{BT}}&=\left( 2^{L+1}-1 \right)d_{\text{MF}}+2^Ld_{\text{MF}} +\left( 2^{L+1}-1 \right)\sum_{i=1}^{L-1}2^i d_{\text{MF}} =\frac{N^2+1}{2}d_{\text{MF}}.
\end{align}

\begin{figure}[t]
\psfrag{N}[c][c][0.8]{$N^*$}
\psfrag{np}[c][c][0.8]{$n_p$}
\psfrag{mmmmmm1}[c][c][0.8]{$m=10$}
\psfrag{mmmmmm2}[c][c][0.8]{$m=20$}
\psfrag{mmmmmm3}[c][c][0.8]{$m=40$}
\centering \includegraphics[width=1\columnwidth, height=3 in]{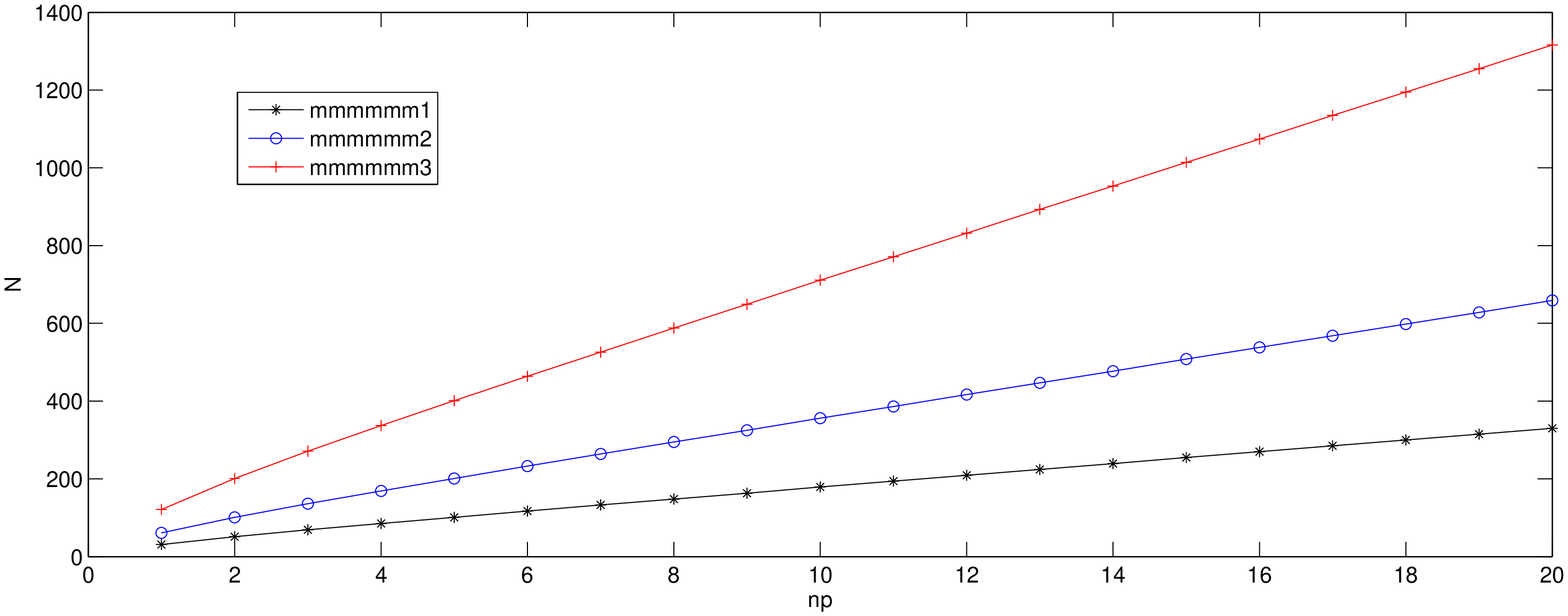} \caption{Critical value $N^*$, as a function of $n_p$, on binary trees, for several values of $m$.}
\label{Fig:CriticalN}
\end{figure}

\subsubsection{Comparison} On a binary tree, \ac{TAS} is more efficient than \ac{MF} when
\begin{align}
& \frac{3N-3}{2}d_{\text{TAS}} < \frac{N^2+1}{2} d_{\text{MF}}.
\end{align}
Using \eqref{dTAS} and \eqref{dMF} one obtains the following condition
\begin{align} \label{eq:comparison_binary_tree}
\left(N^2+1\right)K_1-3N+3 > 0,
\end{align}
where $K_1=\frac{n_p+1}{\left(n_p+n_p\frac{n_p+1}{2}\right)m}$. For sufficiently large $N$, \eqref{eq:comparison_binary_tree} is always satisfied, disregarding $n_p$ and $m$.
Moreover, and unlike in the random tree case, given $n_p$ and $m$, it is possible to derive the value
\begin{align}\label{critValue}
N^*=\frac{ 3+\sqrt{ 9-4K_1 \left(3+ K_1 \right)} } { 2K_1 },
\end{align}
for which \ac{TAS} is more efficient than \ac{MF}. Fig.~\ref{Fig:CriticalN} represents  $N^*$ as a function of $n_p$, considering $m=10, 20, 40$. The behaviour is not exactly linear, as it can be easily verified by derivation of \eqref{critValue}, but rapidly tends to be such: In fact, when $n_p$ grows large, $K_1\approx \frac{2}{n_p m}$ and $N^*\approx \frac{n_p m}{4}\left[ 3+\sqrt{9-\frac{8}{n_p m}\left( 3+ \frac{2}{n_p m} \right)} \;\right]\approx \frac{3}{2}n_p m$.


\subsection{Clustered Topology}
Consider a clustered network, formed by $N$ nodes, structured on a single level of hierarchy (see Fig. \ref{Fig:ClusterRing}). The network is hence assumed to be divided in $n_c$ clusters. The $i$-th cluster comprises a random number of nodes $N_i^{\text{c}}$, including the clusterhead, that is the special node responsible for aggregating the local data of its sons. The subnetwork formed by clusterheads is considered to be fully connected: Clusterheads can directly communicate to one another. Moreover, each node in
a cluster is assumed to directly communicate with its clusterhead (and vice-versa).

\begin{figure}[t]
\centering \includegraphics[width=0.7\columnwidth, height=2.8 in]{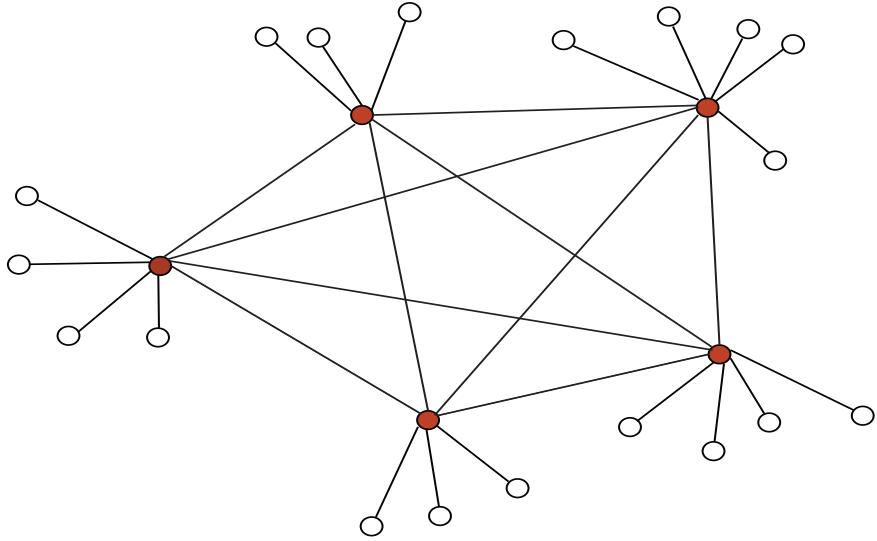} 
\captionsetup{justification=centering} \caption{A clustered topology. Clusterheads are indicated in red.}
\label{Fig:ClusterRing}
\end{figure}

\subsubsection{\ac{TAS} algorithm} On this topology, the \ac{TAS} algorithm transmission phases can be organized as follows. At the beginning, all nodes, with the exception of clusterheads, transmit their local data to the clusterheads. Then each clusterhead aggregates the
local data of all nodes in its cluster. 
Successively, clusterheads transmit to all other clusterheads their aggregated data. Since the network of clusterheads is fully connected, a single broadcast
transmission for each of the clusterhead suffices for all clusterheads being capable to construct the completely aggregated data. The
amount of scalar data, that has to be transmitted, is thus
\begin{align}
N^{\text{cc}}_\text{TAS} & =\left((N-n_c)+n_c+n_c\right)d_{\text{TAS}}\nonumber\\
&=(N+n_c)d_{\text{TAS}}.
\end{align}
This accounts for the initial $N-n_c$ transmissions and the
subsequent actions of clusterheads, that should broadcast to each
other the partially aggregated data and then broadcast, towards
nodes forming their cluster, the completely aggregated data. 

\subsubsection{\ac{MF} algorithm} All nodes in a cluster can overhear broadcast transmissions operated by the corresponding clusterhead. Therefore, the amount of data to be transmitted when employing the \ac{MF} algorithm is
\begin{align}
N^{\text{cc}}_\text{MF} & =\left(\left(N-n_c\right)+N+\left(n_c-1\right)N\right)d_{\text{MF}}\nonumber\\
&=\left(N-n_c+n_cN\right)d_{\text{MF}}.
\end{align}
This is because all nodes, apart from clusterheads, initially transmit their local information to clusterheads, giving rise to $(N-n_c)d_{\text{MF}}$ transmitted scalar data. Then clusterheads broadcast
the received data and their own, this forming a total flow of $Nd_{\text{MF}}$
scalar data. At this point, all nodes in each cluster are completely informed about data related to their respective cluster. Finally, there is a backwards transmission during which each clusterhead
is transmitting towards its cluster all the $Nd_{\text{MF}}$ scalar data except
the ones that it previously transmitted, this being equivalent
to further $\left(n_c-1\right)Nd_{\text{MF}}$ transmitted scalars, composed
of $n_c$ clusterheads transmitting not $N$, but $(N-N_{i}^{\text{c}})d_{\text{MF}}$ scalar data, \emph{i.e.},
a total of $\sum_{i=1}^{n_c}\left(N-N_{i}^{\text{c}}\right)d_{\text{MF}}=\left(n_cN-N\right)d_{\text{MF}}$. 

\subsubsection{Comparison} \ac{TAS} is better than \ac{MF} when $N_{\text{TAS}}^{\text{cc}}<N_{\text{MF}}^{\text{cc}}$, \emph{i.e.}, when
\begin{align}
\left(N-n_c+n_c N\right) d_{\text{MF}} -\left(N+n_c\right) d_{\text{TAS}} &> 0 \nonumber \\
\left(1+\frac{n_c(N-2)}{N+n_c}\right)\frac{d_{\text{MF}}}{d_{\text{TAS}}} &> 1.
\end{align}
Here $n_c$ is the degree of freedom, in lieu of $L$ in the tree topologies. Assuming that, due to coverage extension or interference mitigation purposes, $n_c$ grows to $\infty$ with $N$ going to $\infty$, one has
\begin{align}
\lim_{N\to +\infty} \left(1+\frac{n_cN-2n_c}{N+n_c}\right)\frac{d_{\text{MF}}}{d_{\text{TAS}}} =\lim_{N\to +\infty} n_c\frac{d_{\text{MF}}}{d_{\text{TAS}}} =+\infty,
\end{align}
independently on $n_p$ and $m$. Thus, \ac{TAS} is asymptotically better than \ac{MF}.

\begin{figure}[t]
\psfrag{p1}[c][c][0.8]{$p_1$}
\psfrag{p2}[c][c][0.8]{$p_2$}
\psfrag{p3}[c][c][0.8]{$p_3$}
\centering \includegraphics[width=\columnwidth , height=3 in]{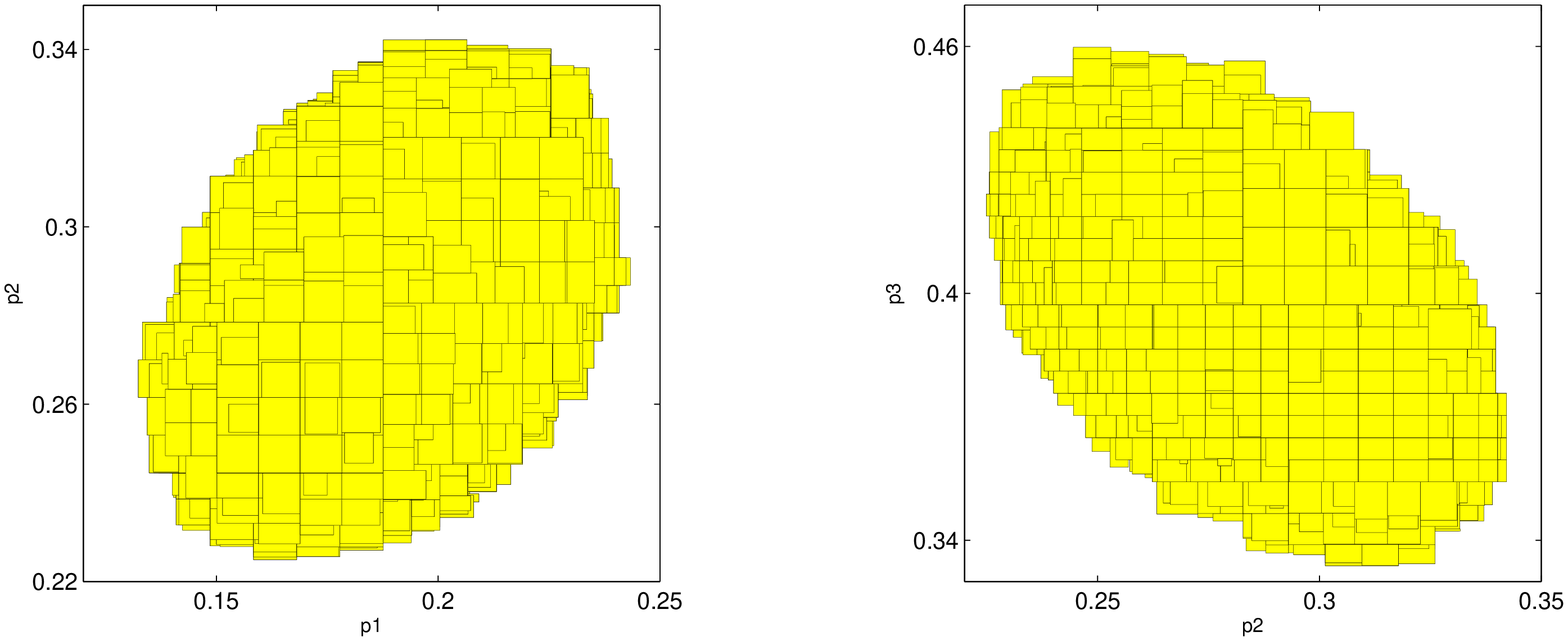} \caption{Projections of the 90\% confidence region computed at node 1 after 4 consensus iterations. A random unstructured network of 100 nodes is considered.}
\label{Fig:Cons4iter}
\end{figure}

\begin{figure}[t]
\psfrag{p1}[c][c][0.8]{$p_1$}
\psfrag{p2}[c][c][0.8]{$p_2$}
\psfrag{p3}[c][c][0.8]{$p_3$}
\centering \includegraphics[width=\columnwidth , height=3 in]{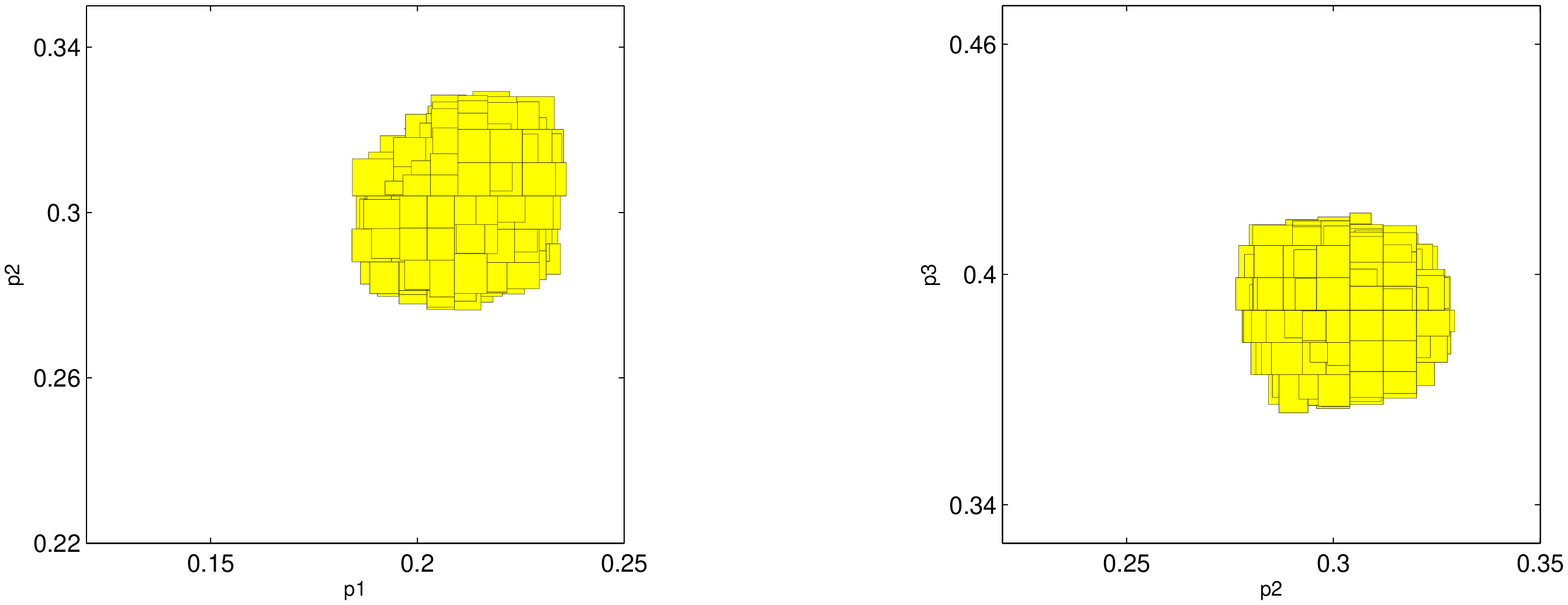} \caption{Projections of the 90\% confidence region computed at node 1 after 30 consensus iterations. A random unstructured network of 100 nodes is considered.}
\label{Fig:Cons30iter}
\end{figure}

\begin{figure}[t]
\psfrag{NumberofNodes}[c][c][0.8]{Number of Nodes}
\psfrag{Rate}[c][c][0.8]{Percentage of network realizations favorable to \ac{TAS}}
\psfrag{nppppppp2}[c][c][0.8]{$n_p=2$}
\psfrag{nppppppp3}[c][c][0.8]{$n_p=3$}
\psfrag{nppppppp4}[c][c][0.8]{$n_p=4$}
\psfrag{nppppppp5}[c][c][0.8]{$n_p=5$}
\centering \includegraphics[width=\columnwidth , height=3 in]{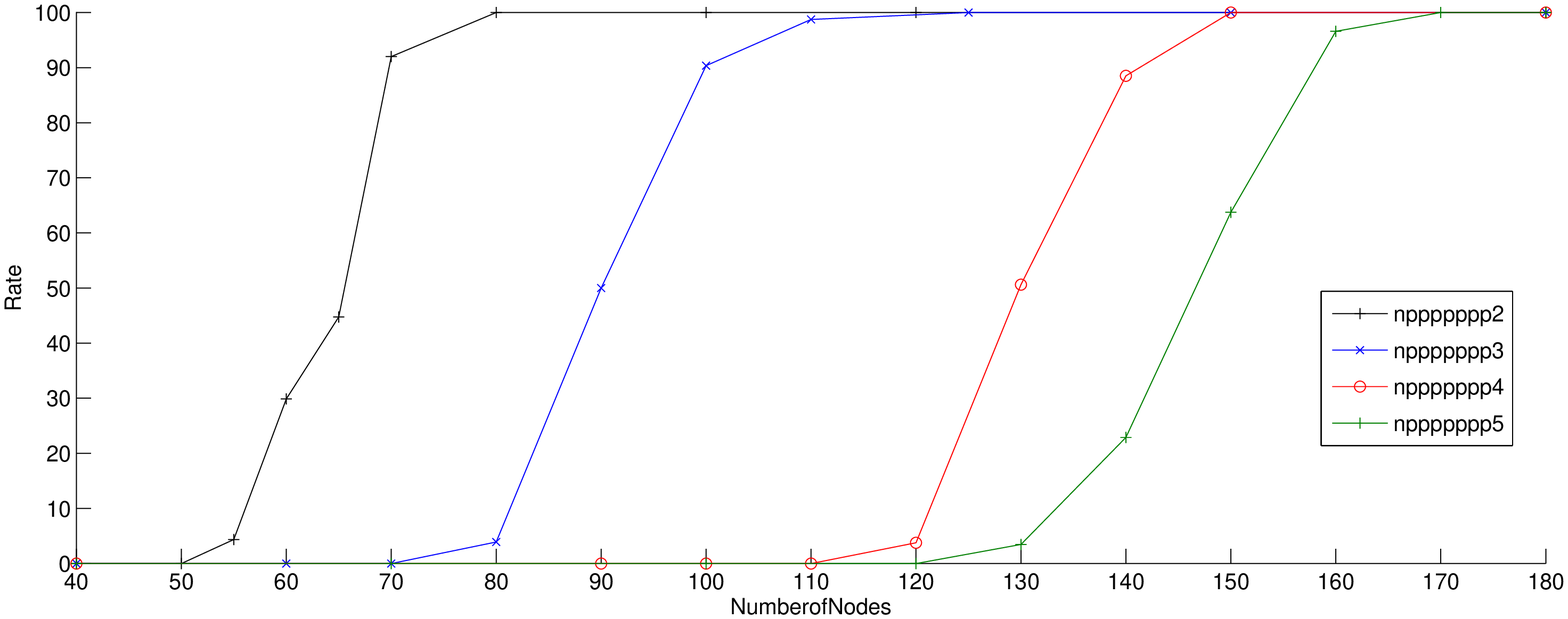} \caption{Percentage of network realizations favorable to \ac{TAS}, in terms of required data exchanges, compared to \ac{MF}, as a function of the number of nodes forming a random tree topology for different values of $n_p$. 100 random tree realizations are considered for each value of $N$.}
\label{Fig:SuccessRateRandomTree}
\end{figure}

\section{Simulation Results}\label{sec:SimRes}
In this section, all simulations results have been obtained assuming lossless links while confidence regions are evaluated with the interval analysis techniques described in \cite{KieWal:14}.\footnote{These techniques allow for the computation of tight outer approximations of confidence regions via contraction of the initial search space. The contraction halting criterion may be set such that single box outer approximations are obtained, instead of multiple boxes outer approximations. For the sake of simplicity and with abuse of terminology, in the remainder, `confidence regions' is used in lieu of `outer approximations of confidence regions', if not otherwise specified.} The Intlab package \cite{Ru99a} is employed for intervals computations. 

We start with a numerical investigation of the effect of truncation in information diffusion on the shape of the confidence region. To this purpose, we instantiate a random unstructured network of $N=100$ nodes, uniformly distributed over a unit area, and consider a true parameter value $\mathring{\mathbf{p}}=[p_1, p_2, p_3]=[0.2, 0.3, 0.4]$. The inter node communication range is set to $d_{\text{comm}}=\sqrt{\frac{\log_2 N}{2N}}$. According to \cite{GupKum:00}, this range guarantees almost sure connectivity of a network of $N$ nodes, deployed on a finite area. A truncated Metropolis consensus algorithm \cite{XiaBoy:04,Xiao06,ZamKieBasPasDar:14} is considered for the distributed computation of confidence regions. Similar results may be obtained also for the other information diffusion strategies, considered in Section~\ref{sec:InfoDiffAlg}. Figs.~\ref{Fig:Cons4iter} and \ref{Fig:Cons30iter} show the confidence region computed at node 1 after 4 and 30 iterations, respectively. The reduction in terms of volume is quite evident in the second case, while we underline that the confidence level is the same.
  
In order to compare the \ac{TAS} and the \ac{MF} algorithms, we consider random trees, clustered networks, and random unstructured topologies, for the same order of magnitude in terms of number of nodes.

For what concerns the analysis on random trees, we build a spanning tree on top of a random unstructured network, setting $d_{\text{comm}}$ as earlier done. For each $N$ (see the horizontal axis in Fig.~\ref{Fig:SuccessRateRandomTree}), 100 connected network realizations are instantiated. \ac{TAS} and \ac{MF} are compared in terms of the required number of data to be transmitted. The success rate of \ac{TAS} is the percentage of network realizations that proved favorable to \ac{TAS} and it is shown in Fig.~\ref{Fig:SuccessRateRandomTree} as a function of $N$, for several $n_p$ values. It can be noticed that, as foreseen in the theoretical analysis in Section~\ref{sec:ThAnalysis}, there always exists a threshold value of $N$, depending on $n_p$, above which the \ac{TAS} outperforms the \ac{MF} algorithm, \emph{i.e.}, the percentage closes to 100.

We now investigate the trade-off between the confidence region volume and the amount of per node transmitted data. 
Fig.\ref{Fig:AvgVolRandTree} shows the average volume of the $90$\% confidence region as a function of the average amount of data that need to be communicated by each node. The volume and data amount are averaged across all nodes and across 100 random tree realizations, while simulation parameters are set to $n_p=2$, $q=1$, $N=100$ and $m=10$. Fig.~\ref{Fig:AvgVolRandTree} allows to know which is the amount of data that need to be transmitted by each node on average to obtain a given confidence region average volume. Each pair of coordinates corresponds to one transmission round. It is easily seen that the \ac{TAS} algorithm outperforms the \ac{MF} for meaningful small volume values, in terms of the amount of per node transmitted data. 
 
\begin{figure}[t]
\psfrag{AverageVolume}[c][c][0.8]{Average Volume}
\psfrag{AverageAmountofTransmittedScalars}[c][c][0.8]{Average amount of per node transmitted scalars}
\psfrag{dataaa1}[c][c][0.6]{\ac{MF}}
\psfrag{dataaa2}[l][l][0.6]{\ac{TAS}}
\centering \includegraphics[width=\columnwidth , height=3 in]{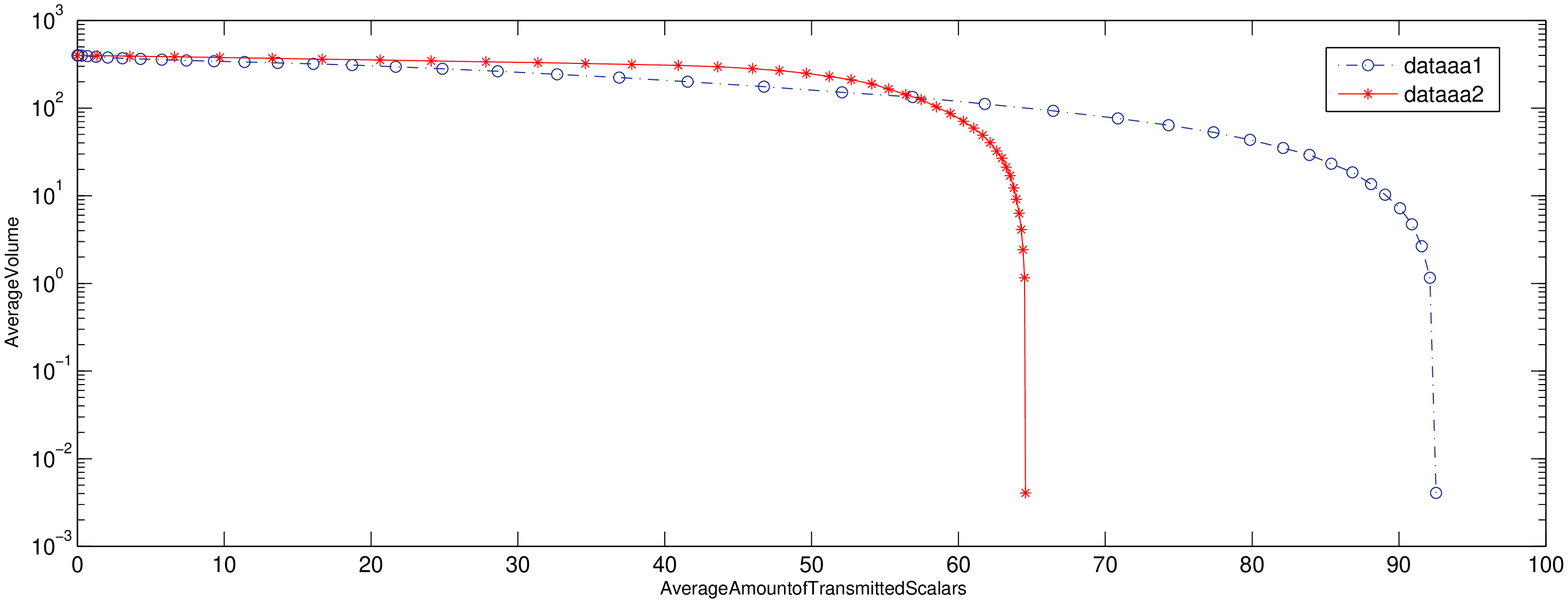} \caption{Average volume, across nodes and 100 random tree realizations, of the 90\% confidence region. Simulation parameters are set to $N=100, n_p=2, q=1$, and $m=10$.}
\label{Fig:AvgVolRandTree}
\end{figure}

\begin{figure}[t]
\psfrag{AverageVolume}[c][c][0.8]{Average Volume}
\psfrag{AverageAmountofTransmittedScalars}[c][c][0.8]{Average amount of per node transmitted scalars}
\psfrag{dataaa1}[c][c][0.6]{\ac{MF}}
\psfrag{dataaa2}[l][l][0.6]{\ac{TAS}}
\centering \includegraphics[width=\columnwidth , height=3 in]{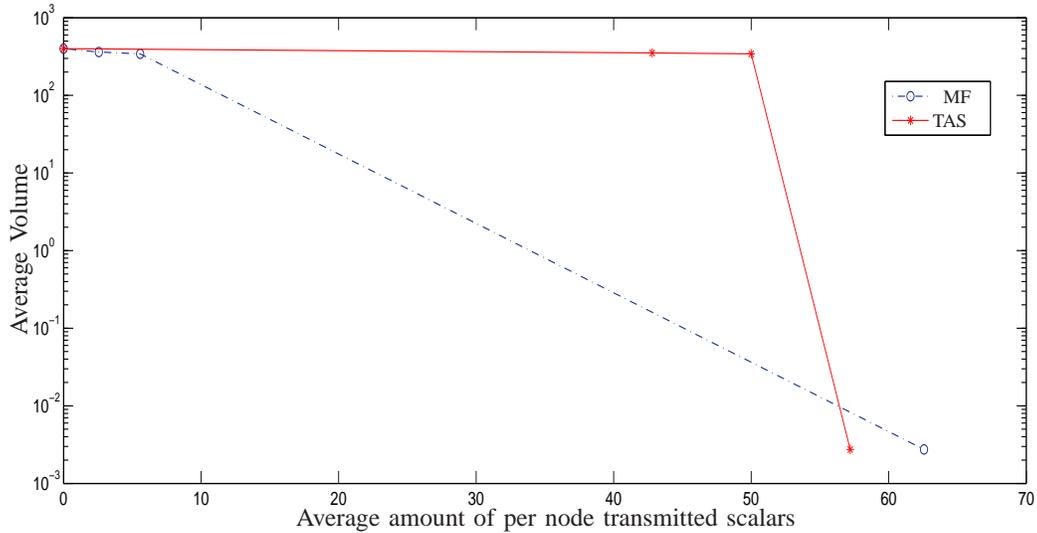} \caption{Average volume, across nodes and 100 clustered network realizations, of the 90\% confidence region. Simulation parameters are set to $n_p=2, q=1, n_c=20$, and $m=10$.}
\label{Fig:AvgVolClusteredNetworks}
\end{figure}

\begin{figure}[t]
\psfrag{AverageVolume}[c][c][0.8]{Average Volume}
\psfrag{AverageAmountofTransmittedScalars}[c][c][0.8]{Average amount of per node transmitted scalars}
\psfrag{data1}[c][c][0.6]{\ac{MF}}
\psfrag{data2}[l][l][0.6]{\ac{TAS}}
\psfrag{MetropolisConsensusssssss}[l][l][0.6]{Metropolis Consensus}
\psfrag{data4}[l][l][0.6]{Perron Consensus}
\centering \includegraphics[width=\columnwidth , height=3 in]{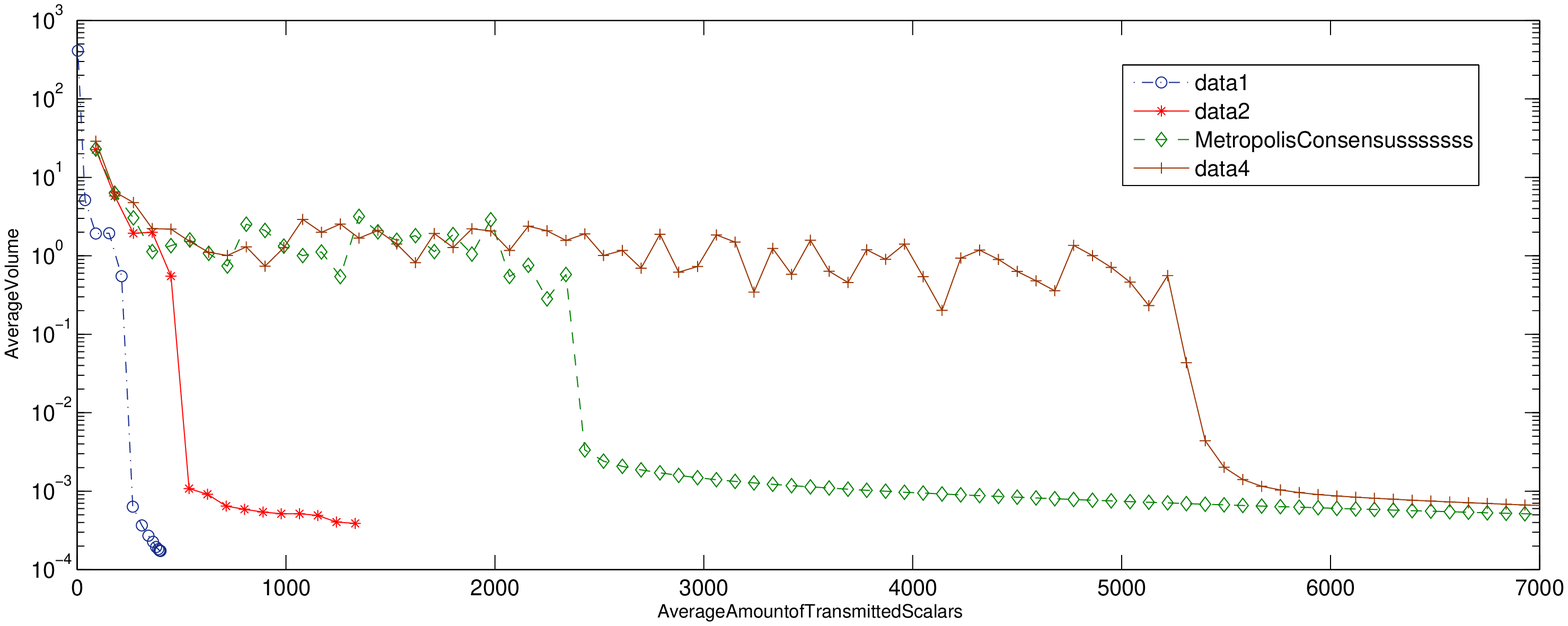} \caption{Average volume, across nodes, of the 90\% confidence region. A random unstructured network of 100 nodes is considered.}
\label{Fig:AvgVol}
\end{figure}

Similar results can be obtained on clustered networks. The number of clusters is set to $n_c=20$ and the average number of per cluster nodes is set to $\mathbb{E}[N_i^{\text{c}}]=7$. The parameter dimension is $n_p=2$, while $q=1$ and $m=10$. In particular, Fig.~\ref{Fig:AvgVolClusteredNetworks} shows the average volume, across nodes and 100 clustered network realizations, of the confidence region. Here the number of computed pairs volume-amount of data is much lower than for random trees, due to the fewer transmission rounds. The average amount of per node transmitted data, needed to obtain meaningful small volumes, is lower when employing the \ac{TAS} algorithm, as it was on random trees.

Finally, consider a random unstructured network, setting $N=100$ and $n_p=3$. As shown in Fig.~\ref{Fig:AvgVol}, in this case it is the \ac{MF} algorithm that behaves better than \ac{TAS}, providing lower volume values for the same amount of data. For comparison, it is also shown how both the \ac{MF} and the \ac{TAS} algorithm outperform the state of the art consensus algorithms, independently of the considered consensus matrix (Metropolis \cite{XiaBoy:04} or Perron \cite{OlfFaxMur:07}).

This section confirms the general behavior that was highlighted in Section~\ref{sec:ThAnalysis}: On structured topologies, such as random trees and clustered networks, there is an advantage in employing the \ac{TAS} algorithm when the network dimension is sufficiently large, and this independently of $n_p$. On unstructured networks of comparable size, the \ac{MF} produces the best results, but, in any case, the absolute amount of per node trasmitted data is much larger than in structured networks. This suggests the adoption of structured networks, together with the \ac{TAS} algorithm for the distributed computation of confidence regions, when the network traffic load for data diffusion is particularly critical.

\section{Conclusions}\label{sec:Concl}
This paper investigated the distributed evaluation of non-asymptotic confidence regions at each node in wireless sensor networks. The first main contribution is the proposal of the \ac{TAS} algorithm and its comparison with other information diffusion algorithms on structured and unstructured topologies. The second important contribution consists in demonstrating that, even in presence of truncated information diffusion, the level of confidence remains the same as in the centralized not truncated case. Simulation results provide a characterization of the trade-off for the achievable average confidence region volume as a function of the required amount of data that each node should transmit on average. The contributions nicely concur at showing that, on structured networks, the proposed \ac{TAS} algorithm is able to outperform the \ac{MF} when the network dimension is sufficiently high, this independently of the specific estimation problem dimensions, as investigated in the theoretical and numerical analyses. Future research work will be directed at combining the benefits of \ac{TAS} and \ac{MF} into a mixed approach, with the scope to further reduce the traffic burden.

\section*{Acknowledgment}
This work has been supported by the EU FP7 funded Network of Excellence Newcom\#.
\ifCLASSOPTIONcaptionsoff
  \newpage
\fi



\bibliographystyle{IEEEtran}
\bibliography{IEEEabrv,SharedBiblio}
\end{document}